\newcommand{\newreptheorem}[2]{\newtheorem*{rep@#1}{\rep@title}\newenvironment{rep#1}[1]{\def\rep@title{#2 \ref*{##1}}\begin{rep@#1}}{\end{rep@#1}}}
\newcommand{\newconjecture}[2]{\newconjecture*{rep@#1}{\rep@title}\newenvironment{rep#1}[1]{\def\rep@title{#2 \ref*{##1}}\begin{rep@#1}}{\end{rep@#1}}}
\theoremstyle{plain}
\newtheorem{theorem}{Theorem}[section]
\newtheorem{lemma}[theorem]{Lemma}
\newtheorem{conjecture}{Conjecture}
\newtheorem{definition}[theorem]{Definition}
\newtheorem{fact}{Fact}
\newtheorem{remark}[theorem]{Remark}
\newtheorem{proposition}[theorem]{Proposition}
\theoremstyle{definition}
\newcommand{\norm}[1]{\|#1\|}
\newcommand{\Comp}{\mathbb{C}}
\newcommand{\myO}{O^\ast\!}
\newcommand{\Real}{\mathbb{R}}
\newcommand{\Nat}{\mathbb{N}}
\newcommand{\ket}[1]{|#1\rangle}
\newcommand{\bra}[1]{\langle#1|}
\newcommand{\BQP}{\mathsf{BQP}}
\newcommand{\NP}{\mathsf{NP}}
\newcommand{\QMA}{\mathsf{QMA}}
\newcommand{\QCMA}{\mathsf{QCMA}}
\newcommand{\MA}{\mathsf{MA}}
\newcommand{\poly}{\mathrm{poly}}
\newcommand{\real}{\mathsf{real}}
\newcommand{\rep}{\mathsf{rep}}
\newcommand{\comp}{\mathsf{imag}}
\providecommand{\Aa}{\mathcal{A}}
\providecommand{\Bb}{\mathcal{B}}
\providecommand{\Cc}{\mathcal{C}}
\providecommand{\Ff}{\mathcal{F}}
\providecommand{\Qq}{\mathcal{Q}}
\providecommand{\Sq}{\mathcal{S}\!\mathcal{Q}}
\providecommand{\Tt}{\mathcal{T}}
\newcommand{\E}{\mathbb{E}}
\newcommand{\var}{\mathrm{var}}
\newcommand{\LH}{\mathsf{LH}} 
\newcommand{\GLHest}{\mathsf{GLH}^{\mathsf{est}}}
\newcommand{\GLHdes}{\mathsf{GLH}^{\ast}} 
\newcommand{\GLH}{\mathsf{GLH}}
\newcommand{\EST}{\mathsf{SVT}}
\newcommand{\SV}{\mathsf{\mathsf{SVE}}}
\newcommand{\Piyes}{\Pi_{\textup{yes}}}
\newcommand{\Pino}{\Pi_{\textup{no}}}
\newcommand{\abs}[1]{\left|#1\right|}
\newcommand{\set}[1]{{\left\{#1\right\}}}    
\newcommand{\hin}{H_{\textup{in}}}
\newcommand{\hprop}{H_{\textup{prop}}}
\newcommand{\hstab}{H_{\textup{stab}}}
\newcommand{\hout}{H_{\textup{out}}}
\newcommand{\trnorm}[1]{\norm{#1}_{\mathrm {tr}}}  
\newcommand{\reals}{{\mathbb R}}
\def\ket#1{ | #1 \rangle}
\def\bra#1{{\langle #1 | }}
\newcommand{\ketbra}[2]{\ket{#1}\!\bra{#2}}        
\newcommand{\braket}[2]{\mbox{$\langle #1  | #2 \rangle$}}
\newcommand\lmin[1]{\lambda_{#1}}
\newcommand{\psihist}{\psi_{\textup{hist}}}
\newcommand{\pQMAlog}{\textup{P}^{\textup{QMA}[\log]}}
\newcommand{\spa}[1]{\mathcal{#1}}
\begin{document}

\title{Dequantizing the Quantum Singular Value Transformation: \\
Hardness and Applications to Quantum Chemistry and the Quantum PCP Conjecture}
\author{
Sevag Gharibian\\
Paderborn University\\
sevag.gharibian@upb.de
\and
Fran{\c c}ois Le Gall\\
Nagoya University\\
legall@math.nagoya-u.ac.jp
}

\date{}
\maketitle
\thispagestyle{empty}
\setcounter{page}{1}
\begin{abstract}
The Quantum Singular Value Transformation (QSVT) is a recent technique that gives a unified framework to describe most quantum algorithms discovered so far, and may lead to the development of novel quantum algorithms. In this paper we investigate the hardness of \emph{classically} simulating the QSVT. A recent result by Chia, Gily{\'{e}}n, Li, Lin, Tang and Wang (STOC 2020) showed that the QSVT can be efficiently ``dequantized'' for low-rank matrices, and discussed its implication to quantum machine learning. In this work, motivated by establishing the superiority of quantum algorithms for quantum chemistry and making progress on the quantum PCP conjecture, we focus on the other main class of matrices considered in applications of the QSVT, sparse matrices.

We first show how to efficiently ``dequantize'', with arbitrarily small \emph{constant precision}, the QSVT associated with a low-degree polynomial. We apply this technique to design classical algorithms that estimate, with constant precision, the singular values of a sparse matrix. We show in particular that a central computational problem considered by quantum algorithms for quantum chemistry (estimating the ground state energy of a local Hamiltonian when given, as an additional input, a state sufficiently close to the ground state) can be solved efficiently with constant precision on a classical computer. As a complementary result, we prove that with \emph{inverse-polynomial precision}, the same problem becomes $\BQP$-complete. This gives theoretical evidence for the superiority of quantum algorithms for chemistry, and strongly suggests that said superiority stems from the improved precision achievable in the quantum setting.  We also discuss how this dequantization technique may help make progress on the central quantum PCP conjecture.
\end{abstract}
\newpage
\section{Introduction}
\subsection{Motivations}\label{sub:motivations}
\paragraph*{The Quantum Singular Value Transformation.}
The Quantum Singular Value Transformation (QSVT) is a recent technique developed by Gily{\'{e}}n, Su, Low and Wiebe \cite{Gilyen+STOC19}, which both generalizes the Hamiltonian simulation results by Low and Chuang \cite{Low+17} and exploits the framework of ``qubitization'' introduced in \cite{Low+19} (see also \cite{Chakraborty+ICALP19}). Informally, the QSVT associated to a function $f\colon[0,\infty)\to\Real$ can be interpreted as a process that implements, given access to a matrix $A\in \Comp^{M\times N}$, a unitary operator ``representing'' the action of the (generally non-unitary) matrix $f(A)$. The matrix $f(A)$ is defined via the singular value decomposition of~$A$, which explains why this transformation is called the quantum singular value transformation. The QSVT gives a unified framework to describe most quantum algorithms discovered so far, such as Hamiltonian simulation \cite{Lloyd96}, Grover search \cite{GroverSTOC96}, amplitude amplification \cite{Brassard+02}, the HHL algorithm for solving linear systems of equations \cite{HHL09} and some quantum walks \cite{Magniez+SICOMP11,SzegedyFOCS04}. As shown in \cite{Gilyen+STOC19}, the QSVT can lead to the development of novel quantum algorithms as well. We also refer to \cite{Martyn+21} for a survey of all these techniques.

Chia, Gily{\'{e}}n, Li, Lin, Tang and Wang \cite{Chia+STOC20} and Jethwani, Le Gall and Singh \cite{Jethwani+MFCS20} have recently shown that under mild smoothness conditions on the function $f$ (which are satisfied, for instance, if $f$ is a low-degree polynomial function), quantum algorithms computing the QSVT of a matrix~$A$ can be ``dequantized'' if $A$ is a low-rank matrix\footnote{More accurately, the runtimes of \cite{Chia+STOC20,Jethwani+MFCS20} depend polynomially on the Frobenius norm of $A$, which is small when the rank is small.} and $A$ is accessible via $\ell_2$-norm sampling. The latter assumption, dating back to works on randomized algorithms for linear algebra \cite{Frieze+JACM04}, means that rows and columns of $A$ can be sampled according to their $\ell_2$-norm, so that rows and columns that are heavier in~$\ell_2$-norm will be sampled more frequently. Starting from Tang's breakthrough on dequantizing quantum algorithms for recommendation systems \cite{TangSTOC19}, this kind of access to the data has now become a standard assumption when considering dequantization of quantum algorithms for linear algebraic problems. Assuming that the input matrix $A$ has low rank, on the other hand, is a strong assumption. This assumption is motivated by applications to machine learning. Indeed, based on this dequantization technique, Chia, Gily{\'{e}}n, Li, Lin, Tang and Wang \cite{Chia+STOC20} were able to dequantize most of the quantum algorithms for machine learning developed so far.

Another central class of matrices considered in quantum algorithms, and especially in quantum algorithms for the simulation of quantum systems, is sparse matrices. Such matrices play a fundamental role in quantum complexity theory as well: the quantum PCP conjecture \cite{Aharonov+13}, which is one of the central conjectures in quantum complexity theory, and specifically in the field of Hamiltonian complexity (see, e.g., \cite{O11,Gharibian+15}), mainly focuses on a subclass of sparse matrices called \emph{local Hamiltonians} (Hamiltonians where each local term only acts on a small number of $k$ of quantum particles, say $k\in O(1)$). In general, it is \emph{not} expected that the QSVT can be dequantized for sparse matrices and \emph{all} (smooth) functions $f$--- this is because this setting contains the HHL~\cite{HHL09} algorithm for inverting well-conditioned matrices, a problem known to be $\BQP$-complete for (high-rank) sparse matrices,
even for \emph{constant} precision~\cite{HHL09}. It is thus fundamental to understand more thoroughly for which functions $f$ the QSVT can be dequantized, and for which $f$ we have a provable quantum advantage.

\paragraph*{Quantum chemistry and eigenvalue estimation.} Quantum chemistry is considered one of the most promising applications of quantum computers \cite{Aaronson09, Aspuru-Guzik+05, Bauer+20, Lee+21, Reiher+17}. From a theoretical perspective, the main goal in quantum chemistry algorithms is clearly defined: compute an estimate of the ground state energy of a local Hamiltonian representing the quantum system\footnote{Most, if not all, known quantum many-body systems in nature evolve in time according to local Hamiltonians, which are $2^n\times 2^n$ Hermitian matrices specified succinctly with $\poly(n)$ bits via local constraints, analogous to local clauses for $3$-SAT. The spectrum of said Hamiltonians is particularly important, as each distinct eigenvalue encodes an energy level of the system, and each corresponding eigenspace the state of the system at that energy. The smallest eigenvalue and corresponding eigenvector are called the \emph{ground state energy}  and \emph{ground state}, respectively.} or, in more mathematical terms, compute the smallest eigenvalue of a sparse matrix. The most rigorous quantum approach to speed up such calculations, first proposed in \cite{Abrams+99,Aspuru-Guzik+05}, utilizes the quantum phase estimation technique \cite{Kitaev95}. Other approaches such as variational quantum algorithms, expected to be easier than phase estimation to implement on near-term quantum computers, have also been developed (see \cite{Cerezo+21} for a survey of recent works on variational quantum algorithms). These other approaches, however, are mostly heuristic-based and thus their performance is much more difficult than quantum phase estimation to analyze rigorously.

\paragraph*{Quantum Phase Estimation and quantum chemistry.} The standard problem quantum phase estimation (QPE) solves is roughly as follows: given (1) an efficient quantum implementation of a unitary matrix $U\in\Comp^{N\times N}$ and (2) an eigenvector $\ket{u}\in\Comp^N$ of $U$ encoded as a physical quantum state, output an estimate of the eigenvalue\footnote{The name ``phase'' estimation arises since all eigenvalues of unitary $U$ are of form $e^{i\theta}$ for $\theta\in\reals$, where $\theta$ is called the ``phase''.} $\lambda_u$ satisfying $U\ket{u}=\lambda_u\ket{u}$. Since any unitary can be written $U=e^{iH}$ for some Hermitian matrix $H$, one can also apply QPE to estimate eigenvalues of \emph{Hermitian} $H$ (assuming one can efficiently implement powers of the unitary $e^{iH}$, which is possible for sparse $H$~\cite{AT03,BCCKS14,Low+19}).

Thus, in quantum chemistry, $U=e^{iH}$ encodes the Hamiltonian $H$ of the system, which is a sparse matrix, and we are interested in the smallest eigenvalue (the ground state energy) of~$H$. In general, this problem (denoted the \emph{local Hamiltonian problem}) is well-known to be $\QMA$-complete \cite{Kitaev+02}, for Quantum Merlin-Arthur ($\QMA$) a quantum analogue of $\NP$. Thus, an additional central working assumption\footnote{We stress that this is a strong assumption. First, from a complexity-theory perspective, Hamiltonians occurring in chemistry can be $\QMA$-complete \cite{Schuch+09}. Thus, a uniformly generated efficient quantum circuit approximating the ground state would yield $\BQP=\QMA$, which is highly unlikely. Indeed, even a \emph{non-uniformly} generated efficient quantum circuit would yield $\QCMA=\QMA$, for $\QCMA$ defined as $\QMA$ but with a classical proof. From a practical perspective, it is also not expected that one can \emph{always} produce an initial state with at least inverse-polynomial overlap with the ground state. In fact, it is possible to contrive instances where the initial state produced by the Hartree-Fock method (described later) has support on the ground state that vanishes exponentially in the number of particles. A classic example is the ``Van Vleck catastrophe'' (see \cite{McClean+14}).} is needed in quantum chemistry: One assumes the ability to efficiently prepare a quantum state ``sufficiently close'' to the true ground state. This assumption is motivated by the observation that in practice, computationally-efficient classical techniques in computational quantum chemistry often give a fairly good initial approximation of the ground state. For instance, the Hartree-Fock method \cite{Echnique+07} typically\footnote{This statement requires clarification. Roughly, the Hartree-Fock method may be viewed as ``optimization over a product state ansatz'' in the second quantization formalism. More formally, one wishes to compute the Fock state with minimal energy against a given fermionic Hamiltonian. For such approaches, however, it should be clarified that even if a Fock state achieves a good approximation of the ground state energy, said state does \emph{not} necessarily achieve good overlap with the ground state. This phenomenon appears for qubit-systems as well --- dense local Hamiltonian systems can be efficiently solved to \emph{arbitrary} constant relative error $\epsilon$ via a product state ansatz~\cite{GK12_3,BH13,B22}, even though the actual ground state may be highly-entangled, and thus far from any product state.} recovers 99\% of the total energy \cite{Whitfield+13}.

With this assumption in place, the question is now about \emph{precision} in QPE. As discussed in \cite{Lee+21}, the holy grail in quantum chemistry is to estimate the ground state energy with precision less than the ``chemical accuracy'' (about 1.6 millihartree), where the norm of the Hamiltonian is allowed to increase polynomially in the number of particles and local dimension per particle. However, if the Hamiltonian is normalized (as in the current paper), then the required precision is inverse-polynomial in these parameters.
The key point is now that QPE has two advantages: (1) It can produce an eigenvalue estimate within inverse polynomial precision efficiently, and (2) this works (with polynomial overhead) even when QPE is given only a ``rough'' approximation~$\ket{\tilde u}$ of the ground state~$\ket{u}$. Thus, quantum algorithms can solve the desired chemistry problem to the desired accuracy in poly-time, given a ``sufficiently good'' initial state $\ket{\tilde u}$.

\paragraph*{Connecting QPE with the QSVT.} QPE can be cast as a special case of the QSVT (see, for instance, Sections IV and V of \cite{Martyn+21}). Thus, given the quantum chemistry application above, a natural question is whether the QSVT, when applied to a local Hamiltonian $H$ or, more generally, to a sparse matrix $A\in\Comp^{N\times N}$, can be dequantized. Somewhat unexpectedly, this is (trivially) possible under mild conditions if access to the \emph{exact} eigenvector is provided: if we have classical random-access to the ground state $u\in\Comp^{N}$ and can efficiently  find a non-zero entry of $u$ (which can be done, for instance, if $u$ is also accessible via $\ell_2$-norm sampling), then the ground state energy can be computed exactly, classically, by computing only one entry of the vector $Au$ (for instance, if $u_1\neq 0$ then $(Au)_1/u_1$ is the eigenvalue corresponding to $u$), which can be done efficiently when $A$ is sparse, e.g., when~$A$ has only $O(\poly(\log N))$ non-zero entries per row. Of course, having an {exact} description of~$u$ in the context of quantum chemistry is not a reasonable assumption, so the interesting and practically relevant question is: \emph{Do similar efficient classical strategies exist given just an approximation $\tilde u$ to $u$?} To our knowledge, this question, closely related to dequantizing the QSVT and central to establishing theoretical foundations of quantum algorithms for quantum chemistry, has not been studied in the literature.\footnote{Wocjan and Zhang \cite{Wocjan+06} showed the $\BQP$-completeness of sampling with precision $1/\poly(\log(N))$ from the eigenvalues of the matrix $A$. This problem is different to our problem: we are targeting one specific eigenvalue (the smallest one) and we are given as input a vector that has large overlap with the corresponding eigenvector (this is a non-trivial assumption which may significantly aid the estimation of the eigenvalue).}

\subsection{Main results}\label{sub:mainres}
We first describe our results concerning estimating the ground state energies of local Hamiltonians, which are easier to state and most closely related to quantum chemistry and quantum PCP applications (see Section \ref{sub:app} for applications). We then describe our more general technique to dequantize the QSVT and estimate singular values of sparse matrices. To begin, we define the notions of ``query-access'' and ``sampling-access''.

\paragraph*{Query-access and sampling-access.}
In this work, as in prior dequantization works\footnote{Rudi, Wossnig, Ciliberto, Rocchetto, Pontil, and Severini~\cite{Rudi2020approximating} also consider ``dequantization'' of simulating quantum Hamiltonian via sampling assumptions. A key difference to our work is that their runtime scales polynomially in the sparsity and Frobenius norm of the Hamiltonian, whereas our runtime is independent of the Frobenius norm (and thus of the rank).} \cite{vdN11,SvdN13,TangSTOC19,Chia+STOC20,Chia+20,Gilyen+20,Jethwani+MFCS20,Rudi2020approximating}, we consider two ways to access vectors and matrices. For a vector $u\in\Comp^N$ for some integer $N\ge 1$,
we say we have \emph{query-access} to $u$ if for any index $i\in\{1,\ldots,N\}$, we can compute coordinate $u_i\in\Comp$ in $O(\poly(\log N))$ time (we assume in this paper that scalars, i.e., complex numbers, can be manipulated at unit cost). Query-access thus corresponds to standard polylogarithmic-cost RAM access to the vector $u$. Query-access to a matrix $A\in\Comp^{N\times N}$ is defined similarly (if the matrix is sparse, we assume that we can directly access the non-zero entries --- see Definition~\ref{def:sparsequery} in Section \ref{sec:prelim}). We say that we have \emph{sampling-access} to $u$ if in addition to having query-access to $u$, we can also efficiently sample from the probability distribution that outputs index $j$ with probability $|u_j|^2/\norm{u}^2$, where $\norm{\cdot}$ is the Euclidean norm, for each $j\in\{1,\ldots,N\}$. For technical reasons, we also require knowledge of a good estimate of $\norm{u}$. Our actual definition of sampling-access is even more general, since it only requires the ability to sample from a distribution close to the ideal distribution, and also only requires the knowledge of an approximation of $\norm{u}$. We refer to Definition~\ref{def:sample} for the precise definition.

\paragraph*{Estimating the ground state energy of a local Hamiltonian.} Given a Hamiltonian $H$ acting on~$n$ qubits (i.e., a Hermitian matrix $H\in\Comp^{2^n\times 2^n}$), let $\lambda_H$ denote its ground state energy (i.e., the smallest eigenvalue of $H$) and $V_H$ the vector space spanned by its ground states (i.e., the vector space spanned by all the eigenvectors corresponding to $\lambda_H$). We denote by $\Pi_H$ the orthogonal projection onto $V_H$. One says $H$ is $k$-local if it can be written as a sum of $\poly(n)$ terms $H=\sum_{i=1}^{\poly(n)}H_i$, where each term $H_i$ acts non-trivially on at most $k$ qubits.
In this paper, we consider the problem of computing the ground state energy $\lambda_H$ of a local Hamiltonian when ``guided'' by a vector $u$ that has some overlap with the space $V_H$. Since in quantum algorithms for this problem the vector $u$ is given as a quantum state, in this paper we consider the natural classical analog: we assume we have sampling-access to $u$. We denote this problem by $\GLHest(k,\epsilon, \delta)$, where $k\ge 2$ is an integer and $\epsilon,\delta$ are real numbers in the interval $(0,1]$. The description of this problem is as follows.\footnote{This problem has similarities with the ``Guided Stoquastic Local Hamiltonian'' problem introduced by Bravyi \cite{Bravyi2015}. In the Guided Stoquastic Local Hamiltonian problem the main differences are: (1) only stoquastic Hamiltonians are considered; (2) in the YES case, an ``easy guiding state'' exists, but in the NO case, no assumption on guiding states is made; (3) the guiding state is not given as an input. The definition of ``guiding state'' in \cite{Bravyi2015} also differs from ours, requiring non-negative amplitudes and ``point-wise overlap'' with the entries of the ground state, as opposed to a global overlap here.}

\begin{center}
\fbox{
\begin{minipage}{13 cm} \vspace{2mm}

\noindent$\GLHest(k,\epsilon,\delta)$ \hspace{3mm}(guided local Hamiltonian)\\\vspace{-3mm}

\noindent\hspace{3mm} Input: a $k$-local Hamiltonian $H$ acting on $n$ qubits such that $\norm{H}\le 1$

\noindent\hspace{15mm}
sampling-access to a vector $u\in\Comp^{2^n}$ such that $\norm{u}\le 1$
\vspace{2mm}

\noindent\hspace{3mm}
Promise: $\norm{\Pi_Hu}\ge \delta$
\vspace{2mm}

\noindent\hspace{3mm} Output: an estimate $\hat{\lambda}$ such that $|\hat\lambda-\lambda_H|\le \epsilon$
\vspace{2mm}
\end{minipage}
}
\end{center}

Quantum algorithms based on QPE (or the QSVT) can solve $\GLHest$ in $\poly(n)$ time for $k=O(\log n)$ even for $\epsilon=1/\poly(n)$ and $\delta=1/\poly(n)$, under the condition that a quantum state representing $u$ can be constructed efficiently (see for instance \cite[Section~IV]{Martyn+21} for a description of the decision version of the problem, which can then be converted into an algorithm for the estimation version by binary search).
In this paper, we first show that with constant precision and constant overlap, this problem can actually be solved in polynomial time even classically.

\begin{theorem}\label{th:classical}
For any constants $\epsilon,\delta\in(0,1]$ and any $k=O(\log n)$, the problem $\GLHest(k,\epsilon,\delta)$ can be solved classically with probability at least $1-1/\exp(n)$ in $O(\poly(n))$ time.
\end{theorem}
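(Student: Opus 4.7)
The plan is to reduce the estimation of $\lambda_H$ to evaluating quantities of the form $\norm{p(H)u}^2$ for constant-degree polynomials $p$ applied to the sparse Hermitian matrix $H$, and then to invoke the paper's main technical contribution: the dequantization of the QSVT for low-degree polynomials applied to sparse matrices. First I would observe that since $H$ is $k$-local with $k=O(\log n)$ and has $\poly(n)$ local terms, each row of $H$ has at most $\poly(n)\cdot 2^k=\poly(n)$ nonzero entries, so $H$ is $\poly(n)$-sparse under the natural sparse-access model (Definition~\ref{def:sparsequery} in Section~\ref{sec:prelim}).

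Next I would build, for each candidate threshold $\mu$ on an $\epsilon/2$-spaced grid of $[-1,1]$, a polynomial $p_\mu$ of constant degree $d=d(\epsilon,\delta)$ that approximates the indicator $\mathbf{1}[x\le\mu]$ on $[-1,1]$ up to additive error $\eta\ll\delta^2$ outside the window $(\mu-\epsilon/2,\mu+\epsilon/2]$, is bounded by $1$ on $[-1,1]$, and whose degree satisfies the classical estimate $d=O(\log(1/\eta)/\epsilon)$ (obtained e.g.\ from a shifted/rescaled polynomial approximation of the error function). I would then apply the sparse-matrix dequantized QSVT to compute an estimate $\hat s$ of $s_\mu := \norm{p_\mu(H)u}^2 = \langle u, p_\mu(H)^2 u\rangle$ to additive precision $\delta^2/8$, boosting the failure probability to $2^{-\Omega(n)}$ via standard median-of-means amplification at a $\poly(n)$ overhead. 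Since $d$ is constant and $H$ is $\poly(n)$-sparse, the cost of this estimate is $\poly(n)$.

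To read off $\lambda_H$ from these tests: if $\lambda_H\le\mu-\epsilon/2$ then, by the promise $\norm{\Pi_H u}\ge\delta$, the component of $u$ supported on eigenspaces with eigenvalue $\le \mu-\epsilon/2$ has squared norm at least $\delta^2$, and $p_\mu$ evaluates within $\eta$ of $1$ there, yielding $s_\mu\ge(1-\eta)^2\delta^2\ge\delta^2/2$. If instead $\lambda_H>\mu+\epsilon/2$, then every eigenvalue of $H$ lies where $|p_\mu|\le\eta$, so $s_\mu\le \eta^2\norm{u}^2\le\eta^2$. These two regimes are separated by a constant gap (larger than twice the estimator's precision $\delta^2/8$), so $\hat s$ decides which case holds, and binary searching over the grid yields $\hat\lambda$ with $|\hat\lambda-\lambda_H|\le\epsilon$ using only $O(\log(1/\epsilon))=O(1)$ threshold tests.

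The main obstacle I anticipate is calibrating the three approximation parameters simultaneously — the polynomial-approximation error $\eta$, the dequantized estimator's precision, and the overlap $\delta$ — while keeping the degree $d$ constant so that the sparse-matrix QSVT dequantization's degree-dependent blowup (typically of the form $\poly(s)^{O(d)}$ in the sparsity $s$) remains polynomial in $n$. A secondary technicality is handling the paper's weaker \emph{approximate} sampling-access to $u$ and only approximate knowledge of $\norm{u}$ (as in Definition~\ref{def:sample}): these imprecisions must be propagated through both the dequantization routine and the final squared-norm estimate, but this should follow from a routine perturbation analysis once the thresholds above are set with sufficient slack.
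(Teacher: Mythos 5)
Your proposal is correct and follows essentially the same route as the paper: reduce to constant-degree threshold-polynomial tests, dequantize the evaluation of $u^\dagger P(\cdot)u$ via sparse query/sampling-access and importance sampling, and recover $\lambda_H$ by a grid search over thresholds. The only cosmetic differences are that the paper phrases everything through the singular value transformation of the shifted positive operator $(H+3I)/4$ (so that the polynomial $P$ is even and $P(\sqrt{A^\dagger A})=P(A)$), estimates $u^\dagger P u$ rather than $\|P u\|^2$, and uses a linear scan over $O(1/\epsilon)$ intervals instead of binary search — none of which changes the substance of the argument.
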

\noindent As described in the remark at the end of Section \ref{sub:proofthclassical}, the approach we develop to show Theorem~\ref{th:classical} actually also solves the version of the problem $\GLHest$ in which only approximate sampling-access to $u$ (as described in Definition \ref{def:sample}) is provided.

We complement this result by a hardness result. We actually show hardness even for a fairly simple class of guiding vectors, which we call \emph{semi-classical states\footnote{For clarity, there is no connection between our terminology ``semi-classical'' and (e.g.) semiclassical physics.}} and now define. First, recall the concept of subset state introduced by Grilo, Kerenidis and Sikora \cite{Grilo+16}:
for any non-empty subset $S \subseteq \{0,1\}^n$, the subset state associated with $S$ is the unit-norm vector
\[
\frac{1}{\sqrt{|S|}}\sum_{w\in S}\ket{w},
\]
i.e., the uniform superposition over the basis states corresponding to the strings in $S$. We say that a unit-norm vector $u\in\Comp^{2^n}$ is a semi-classical state if $u$ is a subset state associated with a subset $S\subseteq \{0,1\}^n$ with $|S|=\poly(n)$. Note that in this case the set $S$ can be represented using a polynomial number of bits. Thus, the \emph{description} of the semi-classical state $u$ is defined as this polynomial-length description of $S$, i.e., an explicit enumeration of the (polynomially many) elements of $S$.

Given the description of a semi-classical state $u$, sampling-access to $u$ can trivially be implemented. This observation motivates the definition of the following non-oracle version of (a gapped decision version of) the problem $\GLHest$, which we denote by $\GLHdes(k,a,b,\delta)$, where $k\ge2$ is an integer, $a,b\in[0,1]$ are two real numbers such that $a<b$, and $\delta\in(0,1]$.

\begin{center}
\fbox{
\begin{minipage}{14 cm} \vspace{2mm}

\noindent$\GLHdes(k,a,b,\delta)$ \hspace{3mm}(guided local Hamiltonian with semi-classical guiding vector) \\\vspace{-3mm}

\noindent\hspace{3mm} Input: $k$-local Hamiltonian $H$ acting on $n$ qubits such that $\norm{H}\le 1$

\noindent\hspace{15mm}
description of a semi-classical state $u\in\Comp^{2^n}$
\vspace{2mm}

\noindent\hspace{3mm}
Promises: (i) $\norm{\Pi_Hu}\ge \delta$

\noindent\hspace{21mm} (ii) either $\lambda_{H}\le a$ or $\lambda_{H}\ge b$ holds
\vspace{2mm}

\noindent\hspace{3mm} Goal: decide which $\lambda_{H}\le a$ or $\lambda_{H}\ge b$ holds
\vspace{2mm}
\end{minipage}
}
\end{center}

Our hardness result is as follows.
\begin{theorem}\label{th:hardness}
For any $\delta\in (0,1/\sqrt{2}-\Omega(1/\poly(n)))$, there exist parameters $a,b\in[0,1]$ with $b-a=\Omega(1/\poly(n))$ such that $\GLHdes(6,a,b,\delta)$ is $\BQP$-hard.
\end{theorem}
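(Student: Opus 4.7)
The plan is to reduce from the canonical $\BQP$-complete circuit-decision problem to $\GLHdes(6,a,b,\delta)$ via a padded Feynman--Kitaev history-state Hamiltonian. Given an input $x$, fix a polynomial-size $\BQP$ circuit $V_x = V_{T_2}\cdots V_1$ on $n$ qubits acting on $|0^n\rangle$, amplified so that the acceptance probability is $\ge 1 - 2^{-n}$ in the YES case and $\le 2^{-n}$ in the NO case.

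For the construction, choose $T_1 = \poly(n)$ such that $(T_1+1)/(T+1) \ge \delta^2 + 1/\poly(n)$, where $T := T_1 + T_2$; the hypothesis $\delta < 1/\sqrt{2} - \Omega(1/\poly(n))$ guarantees this is feasible. Prepend $T_1$ identity gates to $V_x$ and apply the standard Feynman--Kitaev construction with unary clock, producing the $5$-local Hamiltonian
\[
H_x' = H_{\mathrm{in}} + H_{\mathrm{prop}} + \alpha H_{\mathrm{out}},
\]
where $\alpha = 1/\poly(n)$ is chosen so that $\alpha\|H_{\mathrm{out}}\|$ is much smaller than the $\Omega(1/T^2)$ spectral gap of $H_{\mathrm{in}} + H_{\mathrm{prop}}$ above its one-dimensional null-state (the padded history state $|\psi_{\mathrm{hist}}\rangle$ from initial state $|0^n\rangle$). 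After normalization $H_x := H_x'/\|H_x'\|$, we have $\|H_x\|\le 1$. The semi-classical guiding state is
\[
u = \frac{1}{\sqrt{T_1+1}}\sum_{t=0}^{T_1}|0^n\rangle \otimes |1^t 0^{T-t}\rangle_{\mathrm{clock}},
\]
a subset state over the $T_1+1 = \poly(n)$ computational-basis states corresponding to the ``idle'' prefix of the padded computation.

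For the analysis, standard first-order perturbation theory applied to the one-dimensional nullspace of $H_{\mathrm{in}}+H_{\mathrm{prop}}$ yields a ground state of $H_x'$ within $\ell_2$-distance $O(\alpha T^2) = 1/\poly(n)$ of $|\psi_{\mathrm{hist}}\rangle$, with ground-state energy $\alpha\cdot p_{\mathrm{reject}}/(T+1)$ plus smaller corrections. This gives $\lambda_{H_x} \le a$ for a negligibly small $a$ in the YES case and $\lambda_{H_x}\ge b$ with $b - a = \Omega(1/\poly(n))$ in the NO case, yielding the required spectral gap after normalization. For the overlap, since the first $T_1$ gates are identity, $\langle u \mid \psi_{\mathrm{hist}}\rangle = \sqrt{(T_1+1)/(T+1)} \ge \delta + 1/\poly(n)$ directly by construction, and absorbing the $1/\poly(n)$ perturbation error gives $\|\Pi_{H_x} u\| \ge \delta$ in both cases. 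The resulting Hamiltonian is $5$-local, safely within the $6$-local budget.

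The main obstacle is controlling the NO-case ground state of $H_x$, which need not a priori coincide with $|\psi_{\mathrm{hist}}\rangle$. This requires a Kempe--Kitaev--Regev-style projection lemma (or an explicit second-order perturbation estimate) to bound the $\ell_2$-distance from the ground state of $H_x$ to $|\psi_{\mathrm{hist}}\rangle$ by $o(\delta)$, while simultaneously keeping $\alpha$ large enough that the normalized YES/NO gap $b-a$ remains inverse-polynomial. The $\delta < 1/\sqrt{2}$ upper bound in the theorem reflects the regime in which this balancing analysis is tight; proving hardness for $\delta$ above this threshold would likely require a different construction, perhaps one in which the ground state is itself a subset superposition rather than an arbitrary quantum state.
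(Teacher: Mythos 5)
You take a genuinely different route from the paper. Both proofs start from Kitaev's circuit-to-Hamiltonian construction, suppress $\hout$ relative to $\hin+\hprop+\hstab$ (your small factor $\alpha$ on $\hout$ is equivalent, up to global scaling, to the paper's large factor $\Delta$ on the other three terms), and pre-idle the circuit so that the history state has a large explicit component on a subset state. Where you diverge: the paper declares that in the NO case ``we know nothing about the ground space of $H$'' and resolves this by bolting on an Ambainis-style block-encoding gadget --- an extra qubit $D$, with $H = \tfrac{\alpha'+\beta'}{2}I\otimes\ketbra{0}{0}_D + H'\otimes\ketbra{1}{1}_D$ and a guiding state ending in $\ket{+}_D$ --- which makes the NO-case ground space the entire $\ket{0}_D$-block, trivially containing a semi-classical state of overlap $1/\sqrt{2}$ with $\ket{u}$. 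You instead control the NO-case ground state directly: since $\alpha$ is much smaller than the $\Omega(1/T^3)$ gap of $\hin+\hprop+\hstab$ above its unique null state $\ket{\psihist}$, the ground state of $H_x'$ stays close to $\ket{\psihist}$ in \emph{both} cases and the overlap argument goes through without any gadget. This buys you a cleaner construction and, in fact, a stronger statement (see below); what the paper's gadget buys is that the NO-case ground space is manifest with no perturbative argument.

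Two points need fixing. First, you flag the NO-case ground-state control as ``the main obstacle,'' but it is a short computation essentially identical to the paper's own YES-case argument: one always has $\lambda_{\min}(H_x')\le\bra{\psihist}\alpha\hout\ket{\psihist}\le\alpha/(T+1)$, while writing the ground state as $\cos\theta\ket{\psihist}+\sin\theta\ket{\psihist^\perp}$ gives $\lambda_{\min}(H_x')\ge\sin^2\theta\cdot\Omega(1/T^3)$ via the GK gap lemma, so $\sin^2\theta = O(\alpha T^3/T)$. Choosing, say, $\alpha=1/T^{10}$ makes $\sin\theta$ negligible and there is no tension with keeping the normalized YES/NO gap inverse-polynomial --- so your worried ``balancing'' constraint is vacuous. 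Second, and consequently, your closing speculation that ``$\delta<1/\sqrt{2}$ reflects the regime in which this balancing analysis is tight'' is wrong: the $1/\sqrt{2}$ in the theorem statement comes solely from the $\ket{+}_D$ factor in the paper's gadget, and your route, once closed, gives hardness for $\delta$ up to $1-\Omega(1/\poly(n))$ since $\sqrt{(T_1+1)/(T+1)}$ can be pushed arbitrarily close to $1$. Minor oversight: you drop $\hstab$ from $H_x'$, but it is needed both for the one-dimensionality of the null space on the non-unary clock sector and to invoke the $\Omega(1/T^3)$ gap lemma; reinstate it.
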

\noindent Theorem \ref{th:hardness} shows that the decision version of the guided local Hamiltonian problem (and thus the estimation version as well) is $\BQP$-hard for inverse-polynomial precision, even for constant $k$ and $\delta$, and even when the guiding vector $u$ is a semi-classical state given explicitly. Combined with Theorem \ref{th:classical}, this gives compelling evidence that for the guided local Hamiltonian problem, the advantage of quantum computation mainly comes from the improved precision it provides.

\paragraph*{Dequantized algorithms for the Singular Value Transformation.} We now describe our more general technique to dequantize the QSVT and estimate singular values of sparse matrices.
For simplicity, here we discuss only the case of square matrices --- the general case of rectangular matrices is dealt with in Section \ref{sec:SVT}.

Any matrix $A\in \Comp^{N\times N}$ can be written in terms of its \emph{Singular Value Decomposition (SVD)},
\[
    A=\sum_{i=1}^{N}\sigma_i u_iv_i^\dagger,
\]
for real singular values $\sigma_i\geq 0$, and singular vectors $\{u_i\}_i$ and $\{v_i\}_i$, where each set forms an orthonormal basis for~$\Comp^N$. For any $a,b\in [0,\infty)$ such that $a<b$, let $\Pi_A^{[a,b]}$ denote the orthogonal projector onto the subspace generated by the vectors $v_i$ such that $\sigma_i\in[a,b]$.

The \emph{Singular Value Transformation (SVT)} of $A$ associated to an even polynomial $P\in \Real[x]$ (i.e., a polynomial such that $P(x)=P(-x)$) is defined as\footnote{Formally, $\abs{A}:=\sqrt{A^\dagger A}$ generalizes the absolute value function to matrix $A$, and ensures $\abs{A}$ is positive-semidefinite (which implies $\abs{A}$ is diagonalizable with real, non-negative eigenvalues). The polynomial $P$ is then applied as an ``operator function'' to $\abs{A}$, i.e., applied to the eigenvalues of $\abs{A}$.}
\begin{equation}\label{eq:SVTc}
P\left(\sqrt{A^\dagger A}\right)=\sum_{i=1}^{N}P(\sigma_i) v_iv_i^\dagger.
\end{equation}
We next say that a matrix is $s$-sparse if it contains at most $s$ non-zero entries per row and column.
For an $s$-sparse matrix $A\in\Comp^{N\times N}$ with $\norm{A}\le 1$, a unit-norm vector $u\in \Comp^N$ given as a quantum state, and a polynomial $P$ of degree $d$ such that $P(x)\in[0,1]$ for all $x\in[0,1]$, the framework for QSVT developed by Gily{\'{e}}n, Su, Low and Wiebe \cite{Gilyen+STOC19} gives a quantum algorithm that constructs a quantum representation of the vector $P(\sqrt{A^\dagger A})u$ in $O(ds\cdot \poly(\log N,\log s))$ time (assuming quantum query-access to $A$).
For an arbitrary vector $v\in \Comp^{N}$ with $\norm{v}\le 1$ given via sampling-access, this quantum algorithm can then be used to estimate with high probability the complex number $v^\dagger P(\sqrt{A^\dagger A})u$ with additive precision $\epsilon$ in $\poly(\log N,d,s,1/\epsilon)$ time.

In this paper, we show the following result.
\begin{theorem}[Informal version of \Cref{th:classical-est-formal}]\label{th:classical-est}
Let $P\in\Real[x]$ be a constant-degree polynomial.
For any constant $\epsilon>0$, there is a $O(\poly(\log N,s))$-time classical algorithm that, given query-access to an $s$-sparse matrix $A\in\Comp^{N\times N}$ with $\norm{A}\le 1$, query-access to a vector $u\in\Comp^N$, and sampling-access to a vector $v\in \Comp^N$, outputs with high probability an estimate of the quantity $v^\dagger P(\sqrt{A^\dagger A})u$ with additive precision~$\epsilon$.
\end{theorem}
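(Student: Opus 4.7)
\textbf{Proof proposal for Theorem~\ref{th:classical-est}.} My plan is to exploit the fact that the SVT is defined only for even polynomials: writing $P(x)=\sum_{j=0}^{d/2}c_j x^{2j}$ and setting $Q(y):=\sum_{j}c_j y^{j}$ gives $P(\sqrt{A^\dagger A})=Q(A^\dagger A)$, so
\[
v^\dagger P\!\left(\sqrt{A^\dagger A}\right)u \;=\; \sum_{j=0}^{d/2} c_j\cdot v^\dagger\bigl(A^\dagger A\bigr)^j u.
\]
This reduces the problem to estimating $O(1)$ scalars of the form $S:=v^\dagger M_1 M_2\cdots M_t u$, with $t\le d$ and each $M_i\in\{A,A^\dagger\}$. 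Since $d=O(1)$ and the coefficients $c_j$ are bounded (for constant degree), it will suffice to estimate each such $S$ to precision $\Theta(\epsilon)$ and then take the linear combination.

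\emph{Importance-sampling estimator.} To estimate $S$, I would use the sampling-access to $v$ to draw an index $j_0$ from the distribution $|v_{j_0}|^2/\norm{v}^2$, and then compute
\[
X_{j_0}\;:=\;\bigl(M_1 M_2\cdots M_t u\bigr)_{j_0}\;=\;\sum_{j_1,\ldots,j_t}(M_1)_{j_0,j_1}(M_2)_{j_1,j_2}\cdots(M_t)_{j_{t-1},j_t}\,u_{j_t}
\]
\emph{exactly}, by a depth-$t$ traversal of the enumeration tree rooted at $j_0$: the sparse query-access to $A$ (and to $A^\dagger$, which inherits $s$-sparsity from $A$) produces at most $s$ non-zero children at each level, so the tree has at most $s^t$ leaves, each requiring one query to $u$ and $O(t)$ arithmetic operations. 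The per-sample cost is therefore $O(s^t\cdot t\cdot \poly(\log N))$. I then define
\[
Y \;:=\; \norm{v}^2\,\frac{\bar v_{j_0}}{|v_{j_0}|^2}\,X_{j_0},
\]
for which a one-line computation gives $\E[Y]=\sum_{j_0}\bar v_{j_0}X_{j_0}=S$, i.e., $Y$ is unbiased for $S$.

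\emph{Variance and amplification.} The crucial estimate will be
\[
\E\!\bigl[|Y|^2\bigr] \;=\; \sum_{j_0}\frac{|v_{j_0}|^2}{\norm{v}^2}\cdot\frac{\norm{v}^4\,|X_{j_0}|^2}{|v_{j_0}|^2} \;=\; \norm{v}^2\,\norm{M_1\cdots M_t u}^2 \;\le\; \norm{v}^2\norm{u}^2 \;\le\; 1,
\]
using $\norm{A}\le 1$ for the penultimate inequality. Hence $\var(Y)\le 1$ independently of $N$ and $s$, and a standard median-of-means amplification with $O(n/\epsilon^2)$ samples will give $\hat S$ with $|\hat S-S|\le\epsilon$ with probability $1-2^{-\Omega(n)}$. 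Linearly combining the $O(d)$ monomial estimates then yields the desired estimate of $v^\dagger P(\sqrt{A^\dagger A})u$ in total time $O(s^d\cdot n\cdot\poly(\log N)/\epsilon^2)=\poly(\log N, s)$ whenever $d$ and $\epsilon$ are constants.

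\emph{Main obstacle.} The variance bound is clean and the exact-sampling argument is short; the hard part will be handling \emph{approximate} sampling-access, which only guarantees samples from a distribution $\eta$-close in total variation to $|v|^2/\norm{v}^2$, together with an approximate estimate of $\norm{v}$. Since $|Y|$ is not uniformly bounded (the factor $1/|v_{j_0}|$ can blow up), a naive $\ell_1$-argument does not control the bias. My plan is to truncate or reweight the estimator on indices with $|v_{j_0}|^2/\norm{v}^2$ below a polynomially small threshold, which introduces only an $O(\epsilon)$ bias while preserving the variance bound, and to propagate the $\norm{v}$-error through the final scaling. The rectangular case promised in Section~\ref{sec:SVT} then follows from the same argument after straightforward re-indexing of the rows and columns of $A$.
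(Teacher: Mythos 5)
Your proposal is essentially the paper's proof: importance-sample an index $j$ against $v$, compute the $j$-th entry of the polynomial applied to $u$ exactly by a depth-$O(d)$ traversal of the $s$-sparse enumeration tree (cost $s^{O(d)}$ per sample), rescale by $\|v\|^2/v_j$ so the estimator is unbiased, bound the second moment by $\|v\|^2\|u\|^2\le 1$, and amplify with Chebyshev plus a median trick. The paper's Lemma~\ref{lemma2} computes the $i$-th entry of $P(\sqrt{A^\dagger A})u$ for the whole polynomial at once, whereas you split into monomials $v^\dagger(A^\dagger A)^j u$ and recombine; this is a cosmetic difference, though it forces you to absorb $\sum_j|c_j|$ into the per-monomial precision (bounded for constant degree via a Markov-type bound, but worth stating explicitly rather than waving at).

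The one substantive divergence is in the approximate-sampling model, which you flag as ``the hard part.'' You assume the sampler is only $\eta$-close in total variation, in which case the factor $1/v_j$ really can blow up on light indices that the corrupted sampler over-weights, and some truncation would be needed. The paper instead defines $\zeta$-sampling-access (Definition~\ref{def:sample}) as a \emph{pointwise multiplicative} guarantee, $p(j)\in[(1-\zeta),(1+\zeta)]\cdot|v_j|^2/\|v\|^2$ together with $|m-\|v\||\le\zeta\|v\|$. Under that model the ratio $m^2 p(j)/|v_j|^2$ stays within $[1-4\zeta,1+7\zeta]$, so the bias is $O(\zeta)$ and the second moment is $(1+O(\zeta))$ with no truncation needed. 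So your worry is legitimate for the model you posited, but the paper sidesteps it by choosing a stronger (and, for this estimator, the natural) access model rather than by a truncation argument.
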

\noindent In words, Theorem \ref{th:classical-est} efficiently dequantizes the QVST of \cite{Gilyen+STOC19} for \emph{constant}-degree polynomials and \emph{constant} precision, in the sense that both the quantum algorithm and our classical algorithm run in time $\poly(\log N,s)$. (Naturally, exponents in the quantum and classical complexities differ. As shown in the formal statement of Theorem \ref{th:classical-est} (\Cref{th:classical-est-formal}), in the classical case the degree of $P$ appears in the exponent, which is not the case in the quantum case).

 We then show how this result can be used to compute a good estimate of the singular values of a sparse matrix, given a vector that has constant overlap with the relevant subspace.

\begin{theorem}[Informal version of \Cref{th:SVT-formal}]\label{th:SVT}
For any interval $[t_1,t_2]\subseteq [0,1]$, any constant $\theta>0$ and any $s$-sparse matrix $A\in\Comp^{N\times N}$ with $\norm{A}\le 1$ given by query-access, consider the problem of deciding whether
\begin{itemize}
\item[(i)]
$A$ has a singular value in interval $[t_1,t_2]$, or
\item[(ii)]
$A$ has no singular value in interval $(t_1-\theta,t_2+\theta)$.
\end{itemize}
In addition, we are given as input sampling-access to a vector $u\in\Comp^N$, which we are promised in case~$(i)$ satisfies $\norm{\Pi_A^{[t_1,t_2]}u}=\Omega(1)$. Then, there is a $O(\poly(\log N,s))$-time classical algorithm that decides with high probability which of $(i)$ and $(ii)$ holds.
\end{theorem}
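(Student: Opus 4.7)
The plan is to reduce Theorem~\ref{th:SVT} to Theorem~\ref{th:classical-est} by estimating a single inner product of the form $u^\dagger P(\sqrt{A^\dagger A})u$ for a carefully chosen polynomial $P$ that approximates the indicator function of $[t_1,t_2]$. Expanding in the singular value decomposition one gets $u^\dagger P(\sqrt{A^\dagger A})u=\sum_{i=1}^N P(\sigma_i)\,|v_i^\dagger u|^2$, so if $P$ behaves like a $0/1$-valued window function, cases (i) and (ii) should produce detectably different values. Since Theorem~\ref{th:classical-est} provides estimates of $v^\dagger P(\sqrt{A^\dagger A})u$ with constant additive precision in time $O(\poly(\log N,s))$ whenever $P$ has constant degree, it suffices to (a) build such a $P$ of constant degree and (b) verify that the two cases are separated by a constant gap.

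First, I would construct an even, constant-degree polynomial $P\in\Real[x]$ satisfying, on $[0,1]$: (I) $P(x)\ge 1-\epsilon_0$ for $x\in[t_1,t_2]$; (II) $|P(x)|\le \epsilon_0$ for $x\in[0,t_1-\theta]\cup[t_2+\theta,1]$; and (III) $P(x)\ge -\epsilon_0$ for all $x\in[0,1]$; where $\epsilon_0$ is a small constant to be chosen later. Since $\theta$ is constant, such window-function approximants of constant degree are standard in the QSP/QSVT literature, obtainable for instance via Chebyshev approximations of smoothed rectangle functions (cf.~\cite{Gilyen+STOC19}). Evenness can be enforced either by writing $P(x)=Q(x^2)$ with $Q$ a rectangle approximation on $[0,1]$ applied to the shifted window $[t_1^2,t_2^2]$, or by taking a suitable even product of two shifted-sign approximants.

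Next, I would invoke Theorem~\ref{th:classical-est} with $v=u$ and the polynomial $P$ just constructed, obtaining in $O(\poly(\log N,s))$ time an estimate $\hat\mu$ of $\mu:=u^\dagger P(\sqrt{A^\dagger A})u$ with additive error at most $\epsilon_0$. Splitting $\mu=\sum_i P(\sigma_i)|v_i^\dagger u|^2$ according to whether $\sigma_i\in[t_1,t_2]$, properties~(I) and~(III) give, in case~(i), $\mu\ge(1-\epsilon_0)\norm{\Pi_A^{[t_1,t_2]}u}^2-\epsilon_0\norm{u}^2=\Omega(1)$, since $\norm{\Pi_A^{[t_1,t_2]}u}=\Omega(1)$ while $\norm{u}\le 1$. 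In case~(ii), all $\sigma_i$ lie outside $(t_1-\theta,t_2+\theta)$, and property~(II) gives $|\mu|\le\epsilon_0\norm{u}^2\le\epsilon_0$. Choosing $\epsilon_0$ a small enough constant opens a constant gap between the two regimes, and thresholding $\hat\mu$ at the midpoint of this gap decides (i) vs.\ (ii) with high probability.

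The main obstacle is the polynomial-approximation step: we need an even, constant-degree $P$ that is simultaneously a good approximation of the indicator of $[t_1,t_2]$ and essentially non-negative on $[0,1]$, uniformly over the possible sub-intervals $[t_1,t_2]\subseteq[0,1]$. Each of these ingredients is standard, but some care is required to combine them without blowing up the degree or introducing large negative excursions, as such excursions would allow contributions from singular values outside $[t_1,t_2]$ to cancel the signal from the guiding vector in case~(i). Once $P$ is in hand, Theorem~\ref{th:classical-est} supplies the dequantized estimation of $u^\dagger P(\sqrt{A^\dagger A})u$ as a black box, and the rest of the argument is the short spectral calculation above.
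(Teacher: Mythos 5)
Your approach is essentially the same as the paper's: construct an even, constant-degree polynomial approximating the indicator of $[t_1,t_2]$, use Theorem~\ref{th:classical-est} with $v=u$ to estimate $u^\dagger P(\sqrt{A^\dagger A})u$, and threshold. The paper's Lemma~\ref{lemma1} builds $P$ so that $P(x)\in[0,1]$ exactly on $[-1,1]$ (shifting and averaging two sign-function approximants from Low--Chuang, then symmetrizing), which makes the case-(i) lower bound especially clean since one can simply drop the nonnegative off-window terms; your version allows small negative excursions and compensates by subtracting $\epsilon_0\norm{u}^2$, which works just as well. One detail worth making explicit in your write-up: to invoke Theorem~\ref{th:classical-est} you need $|P(x)|\le 1$ on \emph{all} of $[-1,1]$, including the transition regions $(t_1-\theta,t_1)$ and $(t_2,t_2+\theta)$ where your properties (I)--(III) give only a lower bound; the Low--Chuang construction supplies this global bound automatically, and it should be stated as one of the requirements on $P$.
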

\noindent As shown in Section \ref{sub:proofthclassical}, Theorem \ref{th:classical} follows from Theorem \ref{th:SVT}.
Note the condition
$
\norm{\Pi_A^{[t_1,t_2]}u}=\Omega(1)
$
is only required for case (i); in case (ii) there is no condition imposed on $u$.

For context, \emph{quantum} algorithms for the same problem have been considered implicitly in \cite[Section 3.5]{Gilyen+STOC19} and explicitly in \cite[Section 3]{Lin+20} and \cite[Section IV]{Martyn+21}. These quantum algorithms have complexity $O(\poly(\log N,s,1/\theta))$. We thus efficiently dequantize these results for \emph{constant}~$\theta$, in the sense that again both the classical and quantum versions have complexity polynomial in $\log N$ and $s$. 

\subsection{Overview of our techniques}

\paragraph*{$\BQP$-hardness for inverse-polynomial-precision.} Given a $\BQP$ circuit $U=U_m\cdots\allowbreak U_1$, the natural approach is to apply Kitaev's~\cite{Kitaev+02} circuit-to-Hamiltonian construction to map $U$ to $5$-local Hamiltonian $H=\hin+\hprop+\hout+\hstab$ (a quantum analogue of the Cook-Levin construction~\cite{C72,L73}), so that the ground space of $H$ ``simulates'' $U$ as follows: $\hin$ ensures the computation is initialized correctly at time $t=0$, $\hprop$ that each time step $t>0$ correctly applies $U_{t}$, and $\hout$ penalizes any computation which rejects. Then, one can show~\cite{Kitaev+02} that if $U$ accepts with high (resp., low) probability, then $\lmin{H}$ is small (resp., large). The catch is that $\GLHdes$ \emph{also} requires us to produce a semi-classical state $u$ (as defined in Section~\ref{sub:mainres}) with $\norm{\Pi_{H}u}\ge \delta$. This is the challenge --- in general, one does not know what the \emph{exact} eigenvectors of $H$ are.

To get around this, we apply three tricks. First, in the YES case ($U$ accepts with high probability) it is known that the ``history state'' $\ket{\psihist}$ (\Cref{eqn:psihist} in Section~\ref{sec:Th2}) has ``low energy'' against $H$. By the standard trick of weighting $\hin+\hprop+\hstab$ by a large polynomial $\Delta$, we can force $\ket{\psihist}$ to be $1/\poly(\Delta)$ close to a ground state $\ket{\phi}$. To next ensure $\ket{\phi}$ can be approximated by a semi-classical state, we ``pre-idle\footnote{In contrast, ``idling'' usually refers to the same trick applied at the \emph{end} of $U$, so as to increase the amplitude on time step $m$ of the history state.}'' $U$ by adding a large number $N$ of identity gates to the start of $U$ --- this ensures the first $N$ time steps of $\ket{\psihist}$ have large overlap with a semi-classical state (roughly, the all-zeroes initial state of the computation, coupled with an equal superposition over the first $N$ time steps).

However, this alone does not suffice --- in the NO case ($U$ accepts with low probability), we know nothing about the ground space of $H$ (i.e., we do not even have $\ket{\psihist}$ to guide us). To resolve this, we take inspiration from an arguably unexpected source --- the study of $\pQMAlog$.\footnote{$\pQMAlog$ is the set of decision problems decidable by a poly-time deterministic Turing machine making logarithmically many queries to a $\QMA$ oracle.} There, Ambainis~\cite{A14} introduced a block-encoding technique to embed $\QMA$ query answers into a desired qubit of the ground space. While our work is unrelated to $\pQMAlog$, what this technique allows one to do is encode YES and NO instances of $U$ into orthogonal subspaces, affording us a high degree of control in engineering what the ``NO subspace'' looks like. This, in turn, allows us to ensure the semi-classical state $u$ we design has large overlap with the ground state, even in the NO case.

\paragraph*{Dequantizing the QSVT.} We focus on explaining the main ideas to prove Theorem~\ref{th:classical}, which is a special case of our most general result (Theorem \ref{th:SVT}) that ``dequantizes'' the QSVT. The main idea is best illustrated by considering classical algorithms for the decision version of the guided local Hamiltonian problem, which asks to decide which of $\lambda_H\le a$ and $\lambda_H\ge b$ holds, for $b-a=\Omega(1)$, given sampling-access to a vector such that $\norm{\Pi_H u}\ge \delta=\Omega(1)$. Assume for simplicity that (1) all eigenvalues of $H$ are positive (i.e., $H$ is positive-definite), so that the concepts of singular value decomposition and spectral decomposition coincide (i.e., singular values equal eigenvalues), and that (2) the ground space of $H$ is non-degenerate, i.e., $\lambda_H$ has multiplicity 1.
Then, one has (unknown) spectral decomposition
\[
    H=\sum_{i=1}^{N}\sigma_i u_iu_i^\dagger,
\]
where $0<\sigma_1<\sigma_2\le\sigma_3\le \cdots\le\sigma_N\le 1$, with $\sigma_1=\lambda_H$. Write the vector
$u$ with respect to the same (unknown) basis
$
u=\sum_{i=1}^{N}\alpha_i u_i.
$
From the promise on $u$, we know that $|\alpha_1|\ge \delta=\Omega(1)$.

Conceptually, the QSVT-based approach for this problem works as follows. Consider a smooth-enough function $f\colon[0,1]\to[0,1]$ such as $f(x)\approx 1$ if $x\in[0,a]$ and $f(x)\approx 0$ if $x\in[b,1]$. Applying the function~$f$ to $H$ gives
\[
f(H)=\sum_{i=1}^{N}f(\sigma_i) u_iu_i^\dagger \approx\sum_{ \substack{i\in \{1,\ldots,N\}\\
\sigma_i\le a}}u_iu_i^\dagger,
\]
from which we conclude that we have
\begin{equation}\label{eq:test}
\begin{cases}
\norm{f(H)u}\approx 0 &\textrm{ if } \lambda_H\ge b,\\
\norm{f(H)u}\ge \delta f(\lambda_H)\approx \delta &\textrm{ if } \lambda_H\le a.
\end{cases}
\end{equation}
The objective is to use (\ref{eq:test}) to decide whether $\lambda_H\le a$ or $\lambda_H\ge  b$.
There are now two difficulties: how to choose the threshold function $f$, and how to evaluate the quantity $\norm{f(H)u}$.

To enable efficient evaluation of $f(H)$, the function $f$ is chosen as a polynomial of the smallest possible degree. As shown in prior works on quantum algorithms for problems related to singular value estimation \cite{Low+17, Gilyen+STOC19,Martyn+21}, it turns out that since~$b-a=\Omega(1)$, there exists a constant-degree polynomial that approximates our desired threshold function well.

The key technical issue is to \emph{classically} efficiently evaluate the quantity $\norm{f(H)u}$. For this, we first show how to recursively efficiently compute, $\forall i\in\{1,\ldots,N\}$, the $i$-th entry of the vector $f(H)u$ in $\poly(\log N,s)$ time given query-access to the vector $u$ (which we have). This crucially uses the facts that $f$ is a \emph{constant}-degree-$d$ polynomial and $H$ an \emph{$s$-sparse} matrix. Then, the key technical insight is that given sampling-access to a vector $v$, it is possible to compute a good approximation of $v^\dagger f(H)u$ in $\poly(\log N,s)$ time as well. To prove this result, we combine techniques from prior works on dequantization \cite{TangSTOC19} with estimation techniques tailored to our application (this is the main contribution of Section \ref{sec:main}).
Roughly, the approach is natural --- to estimate an inner product $v^\dagger f(H)u$ of exponentially long vectors, we exploit our sampling assumption to ``pick out'' the ``heaviest/most important'' terms in the (sum comprising the) inner product. By rescaling and averaging these appropriately, one can show via the Chebyshev inequality that $\poly(s^d, 1/\epsilon)$ random samples suffice to obtain $\epsilon$ precision.
Finally, taking $v=u$, which is possible since we have sampling-access to~$u$, enables us to calculate an approximation of $u^\dagger f(H)u$, which leads to an approximation of $\norm{f(H) u}$, and then enables us to use (\ref{eq:test}) to decide which of $\lambda_H\le a$ and $\lambda_H\ge  b$ holds.
\subsection{Implications to quantum chemistry and the quantum PCP conjecture}\label{sub:app}
We now describe applications of our dequantization technique to quantum chemistry and to the quantum PCP conjecture.

\paragraph*{Towards a quantum advantage for quantum chemistry.}
The problem $\GLHest$ may be viewed as a theoretical abstraction of the core problem considered in quantum algorithmic proposals for quantum chemistry based on quantum phase estimation: computing an estimate of the ground state energy of a local Hamiltonian, given access to a guiding state sufficiently close to the ground state. Indeed, essentially all those proposals use as guiding state a quantum state that has an efficient classical description that enables sampling-access (e.g., the state obtained by the Hartree-Fock method). With respect to this viewpoint, Theorem \ref{th:hardness} shows that computing an \emph{inverse-polynomial} estimate, which is the level of precision needed for quantum chemistry (\Cref{sub:motivations}), is $\BQP$-hard.\footnote{Additionally, as explained in Section \ref{sub:mainres}, when a quantum state representing the guiding state can be constructed efficiently (e.g., when the guiding state is a semi-classical state), this problem can be solved in quantum polynomial-time.}
Theorem \ref{th:classical}, on the other hand, shows that this problem can be solved efficiently with constant precision on a classical computer.
This gives theoretical evidence that the level of precision required for quantum chemistry problems can only be achieved on \emph{quantum} computers, not classical ones. Thus, quantum chemistry may indeed be a practical application at which quantum computers demonstrate ``superiority'' over classical ones.

Of course, there are two important caveats to keep in mind: (1) By definition, $\GLHest$ assumes possession of a good ``guiding state'' $u$. As discussed in Section \ref{sub:motivations}, although in \emph{practice} such a guiding state can often be found, in the \emph{worst} case, even the existence of such a state is not generally guaranteed. This is because $\GLHest$ without $u$ is essentially the $k$-local Hamiltonian problem (defined below), which is $\QMA$-complete~\cite{Kitaev+02} --- and whether ground states of $\QMA$-hard local Hamiltonians have efficient quantum preparation circuits remains a major open question (i.e., is $\QCMA=\QMA$?). (2) In practice, the particular properties of a given local Hamiltonian (such as geometric constraints) could in principle be exploited to bypass the worst-case $\BQP$-hardness result of Theorem \ref{th:hardness} in special cases. Indeed, one of the main appeals of recent ``quantum supremacy/advantage'' frameworks such as Boson Sampling~\cite{AA11} (or Random Circuit Sampling~\cite{AAM19,BFNV19}) is that, in the long-term setting where ``noise-free'' quantum computers may become a reality (i.e., the setting in which full-scale quantum phase estimation for quantum chemistry is feasible), one can rigorously prove~\cite{AA11} not just \emph{worst}-case hardness of Boson Sampling for classical computers, but even \emph{average}-case hardness! Whether a similar average-case hardness result could hold for $\GLHest$ is left as an open question. Nevertheless, the appeal of $\GLHest$ (versus, say, Boson Sampling) is that it is a \emph{practically motivated} task, whereas all existing near-term ``quantum advantage'' proposals (such as Boson Sampling and Random Circuit Sampling) are primarily ``proofs-of-principle'', i.e., do not have strong practical applications.

Finally, note that our classical simulation (\Cref{th:classical}) applies to general sparse Hamiltonians.\footnote{While the statement of \Cref{th:classical} given in Section \ref{sub:mainres} is for $k$-local Hamiltonians with $k=O(\log n)$, the same classical simulation result holds more generally for $s$-sparse Hamiltonians with $s=O(\poly(n))$, since the dequantization results of Section~\ref{sec:SVT} apply to such sparse matrices as well.} Thus, if one \emph{did} have a good guiding state for a particular fermionic Hamiltonian $H_f$ from quantum chemistry, one could in principle apply a fermion-to-qubit mapping~\cite{JW28,BK02,VC05,WHVT16,SW19,SBMW19,JMBN19,DKBC21} to obtain a sparse Hamiltonian $H$, and then apply \Cref{th:classical}.
\paragraph*{Quantum PCP conjecture.}\label{sub:PCP-QMA}
The standard (i.e., non-guided) Local Hamiltonian problem considered in quantum complexity theory, which we denote by $\LH(k,a,b)$, where $k\ge2$ is an integer and $a,b\in[-1,1]$ are two real numbers such that $a<b$, is defined as follows (see, e.g., \cite{Gharibian+15}).
\begin{center}
\fbox{
\begin{minipage}{15 cm} \vspace{2mm}

\noindent$\LH(k,a,b)$\hspace{3mm}($k$-local Hamiltonian problem)\\\vspace{-3mm}

\noindent\hspace{3mm} Input: a $k$-local Hamiltonian $H$ acting on $n$ qubits such that $\norm{H}\le 1$

\noindent\hspace{3mm}
Promise: $\lambda_{H}\le a$, or $\lambda_{H}\ge b$
\vspace{2mm}

\noindent\hspace{3mm} Goal: decide which of $\lambda_{H} \le a$ or $\lambda_{H}\ge b$ holds
\vspace{2mm}
\end{minipage}
}
\end{center}
\noindent The standard version of the quantum PCP conjecture (see, e.g., \cite{Aharonov+13,Grilo18}) states that this problem is $\QMA$-hard even for $k=O(1)$ and $b-a=\Omega(1)$. (Note the norm constraint $\norm{H}\leq 1$ in $\LH$ is crucial for this to be a meaningful statement, as otherwise one could simply multiply~$H$ by a large enough polynomial to achieve $b-a\in\Omega(1)$.)
\begin{conjecture}[quantum PCP conjecture]\label{conj:PCP}\label{conj:PCP-QMA}
There exist $k=O(1)$ and $a,b\in[-1,1]$ satisfying $b-a=\Omega(1)$ such that $\LH(k,a,b)$ is $\QMA$-hard.
\end{conjecture}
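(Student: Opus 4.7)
The final statement is the celebrated quantum PCP conjecture, which is one of the central open problems in quantum Hamiltonian complexity; I owe the reader an upfront caveat that I do not have a proof, and that the conjecture has resisted attack for more than a decade. What follows is a strategic plan that records the two principal avenues pursued in the literature, points out where each stalls, and notes how the present paper fits in.

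My first plan would be to mimic Dinur's combinatorial proof of the classical PCP theorem by gap amplification. Starting from Kitaev's $\QMA$-hardness of $\LH$ with inverse-polynomial gap, one iterates a preprocessing/powering/composition loop to push the gap up to a constant while maintaining $O(1)$-locality and $\QMA$-hardness. The powering step classically uses random walks on expanders to multiply rejection probabilities; the quantum analogue is obscure because tensoring quantum constraints does not amplify ground-state energies multiplicatively, and detectability-lemma tools (Aharonov, Arad, Landau, Vazirani) give only a logarithmic, not constant, amplification per round. The composition step requires a quantum assignment tester, i.e., locally verifying a global quantum witness using only $O(1)$ qubits of access, which is essentially a quantum locally testable code of constant relative distance and soundness — and none is known.

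My second plan would exploit the recent breakthrough constructions of good quantum low-density parity-check codes (Panteleev--Kalachev, Leverrier--Z\'emor). These codes have now yielded a proof of the NLTS conjecture (Anshu--Breuckmann--Nirkhe), so the natural program is to promote ``no low-energy trivial states'' to ``$\QMA$-hard with constant gap'' by embedding $\QMA$ computations into the codes so that every state of energy below the constant gap decodes to an accepting computation. The obstacle I expect here is the same as above: going from a rigidity statement (low-energy states are circuit-complex) to a hardness statement (they \emph{encode} arbitrary quantum computations) requires a form of local testability that the known qLDPC families do not provide.

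In light of this paper, a complementary plan is to look for reformulations of \Cref{conj:PCP-QMA} on which the dequantization technology of \Cref{th:classical} could act as either a lever or a refutation. The sharp contrast between the classical tractability of $\GLHest$ at constant precision and its $\BQP$-hardness at inverse-polynomial precision (\Cref{th:hardness}) suggests that any successful quantum PCP construction must produce instances admitting \emph{no} polynomially describable guiding state, since otherwise \Cref{th:classical} would supply a classical algorithm that distinguishes the constant-gap promise in polynomial time. The main obstacle I expect is therefore the absence of quantum locally testable codes with constant soundness and constant relative distance; without such an object, neither the gap-amplification plan nor the code-based plan can cross the finish line, and any progress via the QSVT route will likely proceed through intermediate \emph{guided} or \emph{structured} variants rather than the conjecture in its stated form.
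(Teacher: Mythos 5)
You correctly recognize that this is not a theorem but an open conjecture, and the paper provides no proof of it either — Conjecture~\ref{conj:PCP} is stated purely as background. Your survey of the two main lines of attack (Dinur-style gap amplification blocked by the lack of quantum assignment testers / quantum locally testable codes, and the qLDPC-to-NLTS route blocked by the same missing local-testability ingredient) is an accurate reflection of the state of the art, and your observation that the paper's dequantization results constrain what a successful quantum PCP construction must look like is exactly the role \Cref{th:qPCP} plays in the paper.

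One point of precision worth tightening: \Cref{th:classical} by itself does not "supply a classical algorithm" that refutes a constant-gap construction admitting a succinct guiding state, because the algorithm needs the guiding state as part of its input. The correct statement, which is the content of \Cref{th:qPCP}, is that such instances land in $\MA$ — the prover supplies the description, the verifier runs the dequantized singular-value check. This is why the conclusion drawn from \Cref{th:qPCP} under $\QMA \neq \MA$ is not "the quantum PCP conjecture is false" but rather "if it is true, the hard instances cannot have ground spaces with large overlap onto succinctly samplable states," which the paper crystallizes as the NLSS conjecture (\Cref{conj:variant}). That conjecture, sitting strictly between NLTS (presuming $\NP \neq \QMA$) and full quantum PCP, is the paper's proposed intermediate target — and it is a more concrete endpoint for the "QSVT route" than the reformulations you gesture at in your third plan.
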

\noindent This conjecture is one of the central conjectures in quantum complexity theory.

In Section \ref{sec:PCP}, we introduce the concept of quantum states (i.e., unit complex vectors) having a succinct representation that allows sampling-access (Definition~\ref{def:description}).\footnote{We note that a related approach has been used by Bravyi \cite{Bravyi2015} and Liu \cite{Liu2021}. The latter introduced the notion of an ``easy witness'' for stoquastic local Hamiltonian, and showed that the Stoquastic Local Hamiltonian instance with an easy witness in the YES case can be verified in $\MA$. The approach from \cite{Liu2021} considers query access and sample access to the witness. This approach actually works for inverse-polynomial precision, whereas ours (Theorem \ref{th:qPCP}) works only for constant precision.} This concept is natural, in view of the correspondence between quantum states and sampling-access to vectors that has now become standard, and is used both in the present work and in prior works on dequantization \cite{Chia+STOC20,Chia+20,Gilyen+20,Jethwani+MFCS20,TangSTOC19}. A consequence of our dequantization result is the following result (a formal version is stated and proved in Section  \ref{sec:PCP}).
\begin{theorem}[Informal version of \Cref{th:qPCP-formal}]\label{th:qPCP}
Consider problem $\LH(k,a,b)$ with the following additional promise: the ground state of $H$ has constant overlap with a complex vector that has a succinct representation allowing sampling-access.
For any $k=O(\log n)$, any $a,b\in[-1,1]$ such that $b-a=\Omega(1)$, this variant of $\LH(k,a,b)$ is in the complexity class $\MA$.
\end{theorem}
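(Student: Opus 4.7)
The plan is to describe an MA protocol in which Merlin sends a succinct classical description of the guiding vector, and Arthur plays the role of the randomized verifier by running the SVT-estimation algorithm of \Cref{th:SVT}. The overall reduction is conceptually direct, but the soundness argument requires some care because Merlin is adversarial.

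First, I would have Merlin send Arthur a succinct classical description (as per Definition~\ref{def:description}) of a vector $u\in\Comp^{2^n}$, which he claims has constant overlap with the ground space of $H$; in the YES case such a description exists by the additional promise and has polynomial bit length. Arthur then shifts to $H':=(H+I)/2$, which is positive semidefinite with eigenvalues in $[0,1]$, so that the singular values of $H'$ coincide with its eigenvalues. The YES case $\lambda_H\le a$ becomes $\lambda_{H'}\le a':=(a+1)/2$, and the NO case $\lambda_H\ge b$ becomes $\lambda_{H'}\ge b':=(b+1)/2$, with $b'-a'=\Omega(1)$. Setting $t_1:=0$, $t_2:=a'$, and $\theta:=b'-a'$, the YES case reads ``$H'$ has a singular value in $[t_1,t_2]$ with $\norm{\Pi_{H'}^{[t_1,t_2]}u}=\Omega(1)$'', and the NO case reads ``$H'$ has no singular value in $(t_1-\theta,t_2+\theta)$'', which is precisely the dichotomy of \Cref{th:SVT}.

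Next, I would observe that since $k=O(\log n)$ with $\poly(n)$ local terms, each row of $H$ (hence of $H'$) has at most $s=\poly(n)$ nonzero entries, and Arthur has polynomial-time query-access to $H'$ in the sense of Definition~\ref{def:sparsequery}. With $N=2^n$, the algorithm of \Cref{th:SVT} runs in $\poly(\log N,s)=\poly(n)$ randomized time, so Arthur can afford to execute it. He accepts iff the algorithm reports a singular value in $[t_1,t_2]$. For completeness, in the YES case an honest Merlin supplies a description that indeed satisfies the overlap hypothesis, so the algorithm accepts with high probability. For soundness --- and this is the key point --- in the NO case, case~(ii) of \Cref{th:SVT} imposes \emph{no overlap promise} on $u$, so regardless of what description Merlin sends, the algorithm rejects with high probability. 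These bounds comfortably meet the MA completeness/soundness gap.

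The main obstacle in the argument is ensuring that the classical verifier still behaves correctly when Merlin supplies an adversarial description in the NO case. This is exactly what the asymmetric promise structure of \Cref{th:SVT} was designed to handle: the $\Omega(1)$-overlap condition appears only in the YES branch, so a cheating Merlin cannot fool Arthur into accepting when $\lambda_H\ge b$. Once this asymmetry is in hand, the MA containment follows by wrapping the dequantization algorithm inside the standard Merlin-Arthur protocol.
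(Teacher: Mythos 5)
Your overall protocol is the same as the paper's: Merlin sends the succinct representation $\rep(u)$ of a guiding vector, Arthur validates it and runs the SVE algorithm of \Cref{th:SVT}, and soundness comes from the asymmetric promise in case~(ii). However, there is a technical gap in your choice of shift and parameters. You shift to $H' := (H+I)/2$, placing eigenvalues in $[0,1]$, and then set $t_1 := 0$. But the formal version of the SVE problem (and the underlying polynomial construction of Lemma~\ref{lemma1}) requires $\theta_1 \le t_1$, so $t_1 = 0$ forces $\theta_1 = 0$, at which point the degree bound $O\bigl((1/\theta_1 + 1/\theta_2)\log(1/\chi)\bigr)$ diverges. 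In short, the polynomial in the paper must transition from $0$ to $1$ over $[t_1-\theta_1, t_1]$ and back down over $[t_2, t_2+\theta_2]$; with $t_1 = 0$ there is no room on the left. The paper avoids this by shifting to $\frac{H+3I}{4}$, which maps the spectrum into $[1/2,1]$, and then calling $\SV$ with $t_1 = \theta_1 = 1/2$, $t_2 = (3+a)/4$, $\theta_2 = (b-a)/4$; that way $t_1 - \theta_1 = 0$ and both one-sided threshold polynomials have constant degree. Your argument could also be rescued by observing that singular values are nonnegative so only a one-sided threshold is needed, but that requires constructing a different polynomial than the one Lemma~\ref{lemma1} provides, so as written your invocation of \Cref{th:SVT} does not go through.

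A smaller omission: you should state explicitly that Arthur first runs Algorithm $\Aa$ from Definition~\ref{def:description} to verify that Merlin's string is a valid representation $\rep(u)$ for some $u\in\Ff_n$, rejecting otherwise. Without this check, Merlin could send garbage bits and the behavior of the SVE algorithm would be undefined; your soundness argument ``regardless of what description Merlin sends, the algorithm rejects'' implicitly assumes the description decodes to an actual vector.
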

We stress that Theorem \ref{th:qPCP} crucially requires $b-a=\Omega(1)$. Under the widely believed assumption $\QMA\neq \MA$, this theorem can thus be interpreted either as evidence that the quantum PCP conjecture (Conjecture \ref{conj:PCP}) is false, or as evidence that the ground states of the Hamiltonians involved in the quantum PCP conjecture do not have good approximations as states with succinct representations allowing sampling-access.

To discuss the latter possibility further, recall that assuming $\QCMA\neq\QMA$, the only way the quantum PCP conjecture could hold is if ground states of $\QMA$-hard Hamiltonians cannot be reasonably approximated by (even non-uniform) poly-size quantum circuits. Thus, a first step towards proving the quantum PCP conjecture would be to argue that {local} Hamiltonians with such ``complex'' low-energy spaces even exist in the first place. This is essentially the premise of Freedman and Hasting's No Low-Energy Trivial States (NLTS) conjecture\footnote{We remark that since the short version of the article was published~\cite{GG22}, Anshu, Breuckmann, and Nirkhe have proved the NLTS conjecture \cite{ABN22}. For clarity, our NLSS conjecture remains open.} \cite{Freedman+14,Eldar+FOCS17}. To state the latter, one defines the concept of trivial states: a family of quantum states $\{\ket{\varphi_n}\}_{n\in \Nat}$ is \emph{trivial} if the states in the family have low-complexity entanglement, in the sense that they can be generated by a family of constant-depth quantum circuits.

\begin{conjecture}[NLTS conjecture]
There exist a family of $O(1)$-local Hamiltonians $\{H_n\}_{n\in\Nat}$, where each $H_n$ acts on $n$ qubits, and a constant $\epsilon>0$ such that for any family of trivial states $\{\ket{\varphi_n}\}_{n\in \Nat}$, where each $\ket{\varphi_n}$ is an $n$-qubit state, we have for any sufficiently large~$n$:
\[
\bra{\varphi_n}H_n\ket{\varphi_n}>\lambda_{H_n}+\epsilon.
\]
\end{conjecture}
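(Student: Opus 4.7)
The plan is to construct the required family of Hamiltonians from a family of quantum error-correcting codes with strong locality and distance properties. Specifically, I would seek a family of $[[n,k_n,d_n]]$ CSS codes whose stabilizer generators each act on $O(1)$ qubits (a quantum LDPC family) and whose distance scales as $d_n = \Omega(n)$. The candidate Hamiltonian is then the code Hamiltonian $H_n = \sum_i \frac{I - S_i}{2}$, summed over the $O(n)$ stabilizer generators $S_i$. This is $O(1)$-local by construction, its ground space is exactly the codespace, and the ground state energy is $\lmin{H_n}=0$, so $\bra{\varphi_n}H_n\ket{\varphi_n}>\epsilon$ is equivalent to having energy exceeding $\epsilon$.

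The proof would split into two parts. First, a \emph{local-to-global} step: since energy upper-bounds the expected number of violated checks, any $\ket{\varphi_n}$ with $\bra{\varphi_n}H_n\ket{\varphi_n}\leq \epsilon$ can be viewed (after projection onto the codespace via the standard syndrome decoder) as being close to a superposition of low-weight Pauli errors acting on genuine codewords. Second, a \emph{circuit lower bound} step: I would show that no such ``near-codeword'' state is producible by a constant-depth quantum circuit. The key property of trivial $\ket{\varphi_n}$ I would exploit is bounded lightcones: for a depth-$D$ circuit, the reduced density matrix on any region $R$ depends only on qubits within an $O(D)$-neighborhood, so the marginals on two regions at distance $>2D$ factorize. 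Because the code has linear distance, its codewords are locally indistinguishable on any region of diameter $<d_n$, and I would try to contradict this factorization with the existence of non-trivial logical operators.

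Concretely, I would aim to generalize the Eldar--Harrow / Bravyi--Gosset--K\"onig style arguments from output-distribution hardness to a genuine energy lower bound. The route I would pursue is: if $\ket{\varphi_n}$ had energy $\leq \epsilon$, then by the local-to-global step its reduced state on every $O(D)$-ball would look like the reduced state of the true codespace on that ball; combined with factorization across distant regions, one can patch these local marginals together to produce a global state of entanglement structure incompatible with a code of distance $\Omega(n)$, for instance by detecting a logical operator via a relative-entropy or clustering argument along the lines of Freedman--Hastings.

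The main obstacle is twofold. The first is the very \emph{existence} of the underlying code family with all the desired parameters ($O(1)$-local checks, $\Omega(n)$ distance, and strong enough expansion / soundness of the check graph to support the lightcone argument); at the time of writing this had been a long-standing open problem in quantum coding theory. The second, and more subtle, obstacle is strengthening the local-to-global step from ``the marginals of $\ket{\varphi_n}$ resemble those of \emph{some} codeword on each local patch'' to ``they resemble those of a \emph{single consistent} codeword globally''---different patches of a trivial state could in principle mimic different codewords, and ruling this out requires a robust testability / small-set expansion property of the syndrome map that goes substantially beyond the bare distance bound.
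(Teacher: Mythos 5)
The statement you are trying to prove is not proved in the paper at all: it is the NLTS \emph{conjecture} of Freedman and Hastings, stated purely as background motivation for the authors' own NLSS conjecture. The paper offers no argument for it, and explicitly notes (in a footnote) that it remained open when the work was written, with a resolution claimed only later by Anshu, Breuckmann, and Nirkhe. So there is no ``paper proof'' to compare against, and your submission should be judged as a standalone attempt.

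As such, what you have written is a research program rather than a proof, and the two obstacles you name at the end are not minor technicalities --- they are precisely the content of the eventual resolution. First, the existence of $O(1)$-local CSS codes with linear distance (good quantum LDPC codes) was itself a decades-old open problem, settled only by Panteleev--Kalachev; you cannot assume such a family. Second, and more fundamentally, your local-to-global step does not go through with distance alone: a constant-energy state violates a constant \emph{fraction} of the $\Theta(n)$ checks (note the conjecture's normalization --- with an unnormalized code Hamiltonian, constant $\epsilon$ above the ground energy is nearly vacuous), so the state need not be close to the codespace at all, and ``patching local marginals'' fails because different patches can indeed mimic inequivalent codewords. Ruling this out requires a clustering property of \emph{approximate} codewords (small-set boundary expansion of the chain complex), which is a strictly stronger structural property than distance and is exactly the new ingredient Anshu--Breuckmann--Nirkhe had to establish. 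Your lightcone/local-indistinguishability step is the right second half, but without the code existence and the clustering lemma the argument has two genuine, load-bearing gaps. In short: correct intuition, correct identification of where the difficulty lies, but not a proof.
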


\noindent Note that NLTS is based on the presumption that $\NP\neq \QMA$ (since expectation values for local observables against states preparable by constant-depth circuits can be classically efficiently evaluated via a light-cone argument\footnote{We thank Matthew Coudron for pointing this out to us.}), Theorem \ref{th:qPCP} is based on the presumption $\MA\neq\QMA$. In words, if the quantum PCP conjecture holds and $\MA\neq\QMA$, Theorem \ref{th:qPCP} implies that, roughly, sampling access to any low-energy state of a $\QMA$-hard local Hamiltonian should not be possible. We formalize this via the following variant of the NLTS conjecture, which we call the No Low-Energy Samplable States (NLSS) conjecture.

\begin{conjecture}[NLSS conjecture --- informal version of \Cref{conj:variant-formal}]\label{conj:variant}
There exist a family of $O(1)$-local Hamiltonians $\{H_n\}_{n\in\Nat}$, where each $H_n$ acts on $n$ qubits, and a constant $\epsilon>0$ such that for any family of quantum states $\{\ket{\varphi_n}\}_{n\in \Nat}$ that has a succinct representation allowing sampling-access, where each $\ket{\varphi_n}$ is an $n$-qubit state, we have for any sufficiently large~$n$:
\[
\bra{\varphi_n}H_n\ket{\varphi_n} >\lambda_{H_n}+\epsilon.
\]
\end{conjecture}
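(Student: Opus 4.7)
The plan is to attack the NLSS conjecture by starting from the NLTS theorem recently established by Anshu, Breuckmann and Nirkhe~\cite{ABN22} and upgrading its hardness witness from constant-depth-preparable states to the larger class of sampling-accessible states. First I would observe that every trivial state admits a succinct classical description with sampling-access: each amplitude is obtained by contracting the polynomial-sized light cone of a constant-depth circuit, and the associated probability distribution can be sampled by measuring that same light cone. Hence samplable states \emph{contain} all trivial states, so NLSS is strictly stronger than NLTS and the existing NLTS proof alone does not suffice.

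Next, I would use Theorem~\ref{th:qPCP} to turn NLSS into a purely complexity-theoretic statement. If the negation of NLSS held, every $O(1)$-local Hamiltonian would admit a samplable guiding state whose overlap with the ground space is bounded away from zero; then the $\MA$-containment of $\GLHdes$ (Theorem~\ref{th:qPCP}) would collapse the constant-gap local Hamiltonian problem into $\MA$. Combined with Conjecture~\ref{conj:PCP}, this would force $\QMA\subseteq\MA$. Equivalently, NLSS is implied by the conjunction of the quantum PCP conjecture and $\MA\neq\QMA$, which is already a meaningful (conditional) proof and clarifies the target for an unconditional one.

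For an unconditional argument I would work directly with a qLDPC code Hamiltonian $H_{\mathcal{C}}$ from~\cite{ABN22}. Suppose for contradiction that a samplable state $\ket{\varphi}$ satisfies $\bra{\varphi}H_{\mathcal{C}}\ket{\varphi}\le \lambda_{H_{\mathcal{C}}}+\epsilon$. Theorem~\ref{th:classical-est} then allows classical estimation of $\bra{\varphi}P(H_{\mathcal{C}})\ket{\varphi}$ for any constant-degree polynomial $P$, and in particular of the expectation of every local check term. I would feed these estimated local marginals into a soundness/decoding argument: the Tanner-graph expansion of $\mathcal{C}$ forces any assignment that satisfies all local checks to within $\epsilon$ to be close (in an appropriate distance) to a genuine codeword. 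The goal is then to derive a contradiction with the fact that no efficiently samplable distribution can concentrate on a codeword of a good qLDPC code, via a hardness-of-sampling or communication-complexity barrier analogous to the one underlying the NLTS argument.

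The main obstacle is precisely this last step. Sampling-access only gives the diagonal distribution $|\varphi_x|^2$ together with pointwise access to individual amplitudes, whereas local energies depend on off-diagonal entries of $O(1)$-qubit reduced density matrices. Bridging this gap — upgrading sampling-access to genuine control of every local reduced density matrix, and then turning that control into an incompatibility with the global algebraic/topological structure enforced by the qLDPC code — is where a new technical idea is needed. A plausible route is to combine a classical-shadows-style estimator adapted to sampling oracles with the linear-distance decoding bounds of good qLDPC codes, so that approximate consistency with all local checks implies the samplable description must solve a problem believed to be intractable for efficient classical algorithms.
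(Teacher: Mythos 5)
This statement is a \emph{conjecture}, not a theorem: the paper states the NLSS conjecture (informally here, formally in Section~\ref{sec:PCP}), explicitly notes that it ``remains open,'' and offers no proof. Your proposal is therefore attempting something the paper does not do, and it cannot be ``compared against the paper's proof'' since none exists. The paper's only rigorous content surrounding NLSS is Theorem~\ref{th:qPCP} (samplable guiding vectors put the constant-gap $\LH$ problem into $\MA$) together with the informal remark that the conjunction of quantum PCP and $\MA\neq\QMA$ ``roughly'' implies NLSS; your second paragraph restates this conditional observation, which is fine as motivation but is neither new nor an unconditional proof.

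Beyond that mismatch, there is a concrete error in your first step. You claim that every trivial state admits a succinct representation allowing sampling-access because ``each amplitude is obtained by contracting the polynomial-sized light cone'' and ``the associated probability distribution can be sampled by measuring that same light cone.'' This does not follow. Light cones let one compute \emph{local} reduced density matrices (hence expectation values of $O(1)$-local observables) classically, which is exactly what the paper invokes when it says NLTS is predicated on $\NP\neq\QMA$. They do \emph{not} give efficient classical access to individual \emph{global} amplitudes $\langle x|\varphi\rangle$, nor efficient classical sampling from $|\langle x|\varphi\rangle|^2$: both tasks require contracting the full tensor network, and for constant-depth circuits in two or more dimensions this is widely believed (and in restricted models proved) to be classically hard — indeed, precisely this hardness underlies shallow-circuit quantum-advantage results. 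So the containment ``samplable $\supseteq$ trivial'' is unjustified, the relation between NLSS and NLTS is not one of strict strengthening, and the route ``upgrade the NLTS witness'' does not go through as stated. Your third paragraph honestly flags the remaining gap (off-diagonal information is not accessible via sampling-access alone), and that gap is real; but even the setup preceding it already fails for the reason above.
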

\noindent (The formal statement of Conjecture \ref{conj:variant} is given in Section \ref{sec:PCP}.) In principle, since the assumption $\MA\neq \QMA$ is stronger than $\NP\neq\QMA$, the NLSS conjecture ``should'' be ``easier'' to prove than NLTS. Conversely, if the NLSS conjecture can be shown false, then the quantum PCP conjecture is false. Morally, this raises the interesting possibility that the ``straw which breaks the quantum PCP camel's back'' is the same as that used to refute quantum advantages for (say) recommender systems and certain quantum machine learning algorithms --- Tang's classical sampling-access assumption~\cite{TangSTOC19}.

\section{Preliminaries}\label{sec:prelim}
In this paper we assume that arithmetic operations in $\Comp$ (i.e., addition, subtraction, multiplication and division of two scalars) can be done at unit cost.\footnote{All our upper bounds can easily be converted into upper bounds for the bit-complexity of the problem as well if the notions of query-access and sampling-access to the input are redefined appropriately to take in consideration the bit-complexity of the input vectors and matrices.} For any integer $s\ge 1$, we say that a matrix is $s$-sparse if each row and column has at most $s$ non-zero entries. We say that a polynomial $P\in\Real[x]$ is even if $P(x)=P(-x)$ for all $x\in \Real$, and say that $P$ is odd if $P(x)=-P(-x)$ for all $x\in \Real$.

For $\epsilon\in(0,1)$, we say that a real number $\tilde \mu$ is an $\epsilon$-additive-approximation of a real number~$\mu$ if $|\tilde \mu -\mu|\le \epsilon$.
As in \cite{Montanaro+05}, we will use the following ``powering lemma'' from \cite{Jerrum+86} to amplify the success probability of probabilistic estimators.\footnote{Lemma 6.1 in \cite{Jerrum+86} is actually stated in terms of multiplicative approximation (approximation within an $1+\epsilon$ ratio), rather than additive approximation. The proof, however, only uses Chernoff's bound and thus applies to additive approximation as well.}
\begin{lemma}[Lemma 6.1 in \cite{Jerrum+86}]\label{lemma:powering}
Let $\Aa$ be an algorithm that produces with probability at least $3/4$ an $\epsilon$-additive-approximation of a quantity $\mu$. Then, for any $\delta>0$, it suffices to repeat $\Aa$ $O(\log(1/\delta))$ times and take the median to obtain an $\epsilon$-additive-approximation of $\mu$ with probability at least $1-\delta$.
\end{lemma}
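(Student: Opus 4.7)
The plan is to invoke the standard ``median trick'' analysis via a Chernoff bound. Let $k = C\log(1/\delta)$ for a constant $C$ to be chosen, and let $X_1,\ldots,X_k$ denote the independent outputs of $k$ runs of $\Aa$. Call a run \emph{good} if $|X_i - \mu| \le \epsilon$, and let $Y_i \in \{0,1\}$ be the indicator of this event; by hypothesis $\Pr[Y_i = 1] \ge 3/4$, so $\E[\sum_i Y_i] \ge 3k/4$.

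The key observation is that if the median $\tilde\mu$ of $X_1,\ldots,X_k$ satisfies $|\tilde\mu - \mu| > \epsilon$, then strictly more than $k/2$ of the $X_i$ must lie outside $[\mu-\epsilon,\mu+\epsilon]$ (since the median sits with at least half the samples on each side of it, and if the median is above $\mu+\epsilon$ then at least $\lceil k/2\rceil$ of the samples are above $\mu+\epsilon$, and symmetrically for $\mu-\epsilon$). Equivalently, $\sum_i Y_i < k/2$, which is a deviation of at least $k/4$ below the expectation.

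I would then apply a Chernoff bound to $\sum_i Y_i$ (a sum of independent $[0,1]$-valued random variables) to conclude
\[
\Pr\!\left[\sum_{i=1}^k Y_i < k/2\right] \le \exp(-\Omega(k)) = \exp(-\Omega(C \log(1/\delta))),
\]
and choose the constant $C$ large enough that the right-hand side is at most $\delta$. This yields the claimed success probability $1-\delta$ using $O(\log(1/\delta))$ runs of $\Aa$ plus an $O(k)$-time median computation.

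The only mild subtlety is the median-to-sample counting step in the second paragraph, which must account correctly for ties and for the parity of $k$; taking $k$ odd removes any ambiguity and the bound above goes through verbatim. Everything else is routine, and indeed the argument is exactly the one given in the proof of Lemma~6.1 of \cite{Jerrum+86} (with the multiplicative approximation there replaced by an additive one, a change that does not affect the Chernoff-based analysis at all).
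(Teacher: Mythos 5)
Your proof is correct and is exactly the standard median-trick argument via a Chernoff bound; the paper itself does not reprove this lemma but cites Lemma 6.1 of Jerrum, Valiant, and Vazirani, noting in a footnote that the Chernoff-based argument there carries over unchanged from multiplicative to additive approximation. Your write-up matches that cited argument, including the correct counting step relating the median's deviation to the number of bad samples.
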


\subsection{Query-access and sampling-access to vectors and matrices}

We first define two natural ways of accessing vectors and matrices. First, we define query-access to a vector, which corresponds to the standard model where the input is stored in a (classical) random-access-memory with polylogarithmic-time access cost.
\begin{definition}\label{def:query-access}
We say that we have query-access to a vector $u\in \Comp^{N}$ if there exists an $O(\poly(\log N))$-time classical algorithm $\Qq_u$ that
on input $i\in\{1,\ldots,N\}$, outputs the entry $u_i$.
\end{definition}

We define query-access to sparse matrices in a way that allows to efficiently recover the non-zero entries.
\begin{definition}\label{def:sparsequery}
We say that we have query-access to an $s$-sparse matrix $A\in \Comp^{M\times N}$ if we have access to two $O(\poly(\log MN))$-time classical algorithms $\Qq_A^{\mathrm{row}}$ and $\Qq_A^{\mathrm{col}}$ such that
\begin{itemize}
\item
on input $(i,\ell)\in\{1,\ldots,M\}\times\{1,\ldots,s\}$, the algorithm $\Qq_A^{\mathrm{row}}$ outputs the $\ell$-th non-zero entry of the $i$-th row of $A$ if this row has at least $\ell$ non-zero entries, and outputs an error message otherwise;
\item
on input $(j,\ell)\in\{1,\ldots,N\}\times\{1,\ldots,s\}$, the algorithm $\Qq_A^{\mathrm{col}}$ outputs the $\ell$-th non-zero entry of the $j$-th column of $A$ if this column has at least $\ell$ non-zero entries, and outputs an error message otherwise.
\end{itemize}
\end{definition}

We now define the concept of sampling-access to a vector. Compared with prior works \cite{ Chia+STOC20,Chia+20,Gilyen+20,Jethwani+MFCS20,TangSTOC19}, our definition introduces a slight generalization that we call $\zeta$-sampling-access, where $\zeta\in[0,1)$ is a parameter (for $\zeta=0$ we obtain the same notion of sampling-access as in prior works). Note that \cite{Chia+STOC20} also considered a slighly different generalization, which was called $\phi$-oversampling.

\begin{definition}\label{def:sample}
For a parameter $\zeta\in[0,1)$, we say that we have $\zeta$-sampling-access to a vector $u\in \Comp^{N}$ if the following three conditions are satisfied:
\begin{itemize}
\item[(i)]
we have access to an $O(\poly(\log N))$-time classical algorithm $\Qq_u$ that on input $i\in\{1,\ldots,N\}$, outputs the entry $u_i$;
\item[(ii)]
we have access to an $O(\poly(\log N))$-time classical algorithm $\Sq_u$ that samples from a probability distribution $p\colon\{1,\ldots,N\}\to [0,1]$ such that
\[
p(j)\in\left[(1-\zeta)\frac{|u_j|^2}{\norm{u}^2},(1+\zeta)\frac{|u_j|^2}{\norm{u}^2}\right]
\]
for all $j\in\{1,\ldots,N\}$;
\item[(iii)]
we are given a real number $m$ satisfying $|m-\norm{u}|\le \zeta \norm{u}$.
\end{itemize}
We simply say that we have sampling-access to $u$ (without specifying $\zeta$) if we have $0$-sampling-access.
\end{definition}


\subsection{The Singular Value Transformation}
Any matrix $A\in \Comp^{M\times N}$ can be written in the form
\begin{equation}\label{eq:SVT2}
A=\sum_{i=1}^{\min(M,N)}\sigma_i u_iv_i^\dagger,
\end{equation}
where the $\sigma_i$'s are nonnegative real numbers, $\{u_i\}_i$ are orthonormal vectors in $\Comp^M$ and $\{v_i\}_i$ are orthonormal vectors in $\Comp^N$. Such a decomposition is called a Singular Value Decomposition of~$A$.  Note that while the $\sigma_i$'s are unique (up to reordering), the choice of the $u_i$'s and $v_i$'s is not unique. When $M\ge N$ we obviously have $\min(M,N)=N$. When $M<N$, it will be convenient to also write the decomposition as
\begin{equation}\label{eq:SVT}
A=\sum_{i=1}^{N}\sigma_i u_iv_i^\dagger
\end{equation}
by using the following convention:
for each $i\in\{M+1,\ldots, N\}$ we set $\sigma_i=0$ and choose an arbitrary vector $u_i$ in $\Comp^M$, and we choose the $v_i$'s for $i\in\{M+1,\ldots, N\}$ so that $\{v_i\}_{i\in\{1,\ldots,N\}}$ forms an orthonormal basis~$\Comp^N$.

Consider an even polynomial $P\in\Real[x]$ and write
\[
P(x)=a_0+a_2x^2+a_4x^4+\cdots+ a_{2d}x^{2d},
\]
where $2d$ is the degree of $P$ and $a_0,\ldots,a_{2d}$ are real coefficients. The polynomial singular value transformation of $A$ associated to $p$ is defined as
\[
P(\sqrt{A^\dagger A})=a_0I_N+a_2A^\dagger A+a_4(A^\dagger A)^2+\cdots+a_{2d}(A^\dagger A)^d,
\]
where $I_N$ denotes the identity matrix of dimension $N$.
Using Eq.~(\ref{eq:SVT}), we obtain
\begin{align*}
P(\sqrt{A^\dagger A})&=a_0I_t+a_2\sum_{i=1}^{N}\sigma_i^2 v_iv_i^\dagger+a_4\sum_{i=1}^{N}\sigma_i^4 v_iv_i^\dagger+\cdots+a_{2d}\sum_{i=1}^{N}\sigma_i^{2d} v_iv_i^\dagger\\
&=\sum_{i=1}^{N}P(\sigma_i) v_iv_i^\dagger.
\end{align*}

For any $a,b\in [0,1]$ such that $a<b$, we define
\[
\Pi_A^{[a,b]}=\sum_{\substack{i=\{1,\ldots,N\}\\
\sigma_i\in [a,b]}} v_iv_i^\dagger.
\]
Note that $\Pi_A^{[a,b]}$ is the orthogonal projector into the subspace of the row space corresponding to the singular values in the interval $[a,b]$.

\begin{remark} The Singular Value Transformation can also be defined for an odd polynomial $P\in\Real[x]$. In this case, however, the definition needs to be slightly adjusted by swapping the left singular vectors $\{u_i\}_i$ and the right singular vectors~$\{v_i\}_i$. Our dequantization technique can easily be adapted to work in this case as well, but to avoid confusion in this paper we discuss only the case of even polynomials, which is enough for our application to estimating the singular values of a sparse matrix.
\end{remark}
\section[BQP-Hardness of Polynomial-Precision Eigenvalue Estimation]{$\BQP$-Hardness of Polynomial-Precision Eigenvalue Estimation}\label{sec:Th2}
In this section we prove the $\BQP$-Hardness of the guided local Hamiltonian problem with inverse-polynomial precision (Theorem \ref{th:hardness}).

\begin{proof}[Proof of \Cref{th:hardness}]
    Let $\Pi=(\Piyes,\Pino)$ be a promise problem in $\BQP$, and $x\in\set{0,1}^n$ an input. Let $U=U_m\cdots U_1$ be the poly-time uniform circuit consisting of $1$- and $2$-qubit gates $U_i$ deciding $\Pi$. Specifically, $U$ takes in an $n$-bit input register $A$, and $p(n)$-qubit (for polynomial~$p$) ancilla register $B$ initialized to $\ket{0}^{\otimes p(n)}$. If $x\in\Piyes$ (resp., $x\in\Pino$), then measuring $U$'s output qubit in the standard basis yields $1$ with probability at least $\alpha$ (resp.,~at most $\beta$). Via standard error reduction for $\BQP$ (via parallel repetition), we may assume $\alpha=1-2^{-n}$ and $\beta=2^{-n}$,
We give a poly-time many-one reduction from $(x,U)$ to an instance of $H$ of $\GLHdes(k,a,b,\delta)$, for $k,a,b,\delta$ to be chosen as needed.

    \paragraph*{Kitaev's construction.} To begin, for completeness we restate Kitaev's~\cite{Kitaev+02} $5$-local circuit-to-Hamiltonian construction elements applied to our $U$:
\begin{eqnarray*}
    \hin&:=&
    \left(I-\ketbra{x}{x}\right)_{A}\otimes\left(I-\ketbra{0\cdots 0}{0\cdots 0}\right)_{B}\otimes \ketbra{0}{0}_{C}\label{eqn:hin}\\
    \hout&:=&\ketbra{0}{0}_{\textup{out}}\otimes
     \ketbra{m}{m}_{C}\label{eqn:hout}\\
    \hstab&:=&\sum_{j=1}^{m-1}\ketbra{0}{0}_{C_j}\otimes\ketbra{1}{1}_{C_{j+1}}\label{eqn:hstab}\\
    \hprop &:=& \sum_{t=1}^{m} H_t {\rm,~where~ }H_t{\rm ~is ~defined~ as}\label{eqn:hprop}\\
    H_t&:=&-\frac{1}{2}U_t\otimes\ketbra{t}{ {t-1}}_{C} -\frac{1}{2}U_t^\dagger\otimes\ketbra{{t-1}}{t}_{C} +\frac{1}{2}I\otimes(\ketbra{ t}{t}+\ketbra{ {t-1}}{ {t-1}})_{C},\label{eqn:ht}
\end{eqnarray*}
 where:
\begin{itemize}
\item
$A$, $B$, and $C$ are the input, ancilla, and clock registers, respectively,
\item
$C$ is encoded implicitly in unary,
\item
$C_j$ denotes the $j$th qubit of register $C$,
\item
$\ketbra{0}{0}_{\textup{out}}$ projects onto the dedicated output wire of $U$ (for example, the first qubit of $B$),\footnote{In contrast to \cite{Kitaev+02}, which studied $\QMA$, here we do not require a separate proof register, as $\BQP$ does not take in a proof.}
\item
finally, all  Hamiltonians above are positive semi-definite.
\end{itemize}
The circuit-to-Hamiltonian construction $H'=\hin+\hprop+\hout+\hstab$ has the following property.

 \begin{fact}[\cite{Kitaev+02}]\label{fact:1}
    Let $H'=\hin+\hprop+\hout+\hstab$. Then,
 \begin{itemize}
    \item if $U$ accepts $x$ with probability at least $\alpha$, then $\lmin{H'}\leq \frac{1-\alpha}{m+1}=:\alpha'$,
    \item if $U$ accepts $x$ with probability at most $\beta$, then $\lmin{H'}\geq \Omega\left(\frac{1-\sqrt{\beta}}{m^3}\right)=:\beta'$.
 \end{itemize}
 \end{fact}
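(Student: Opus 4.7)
The plan is to follow Kitaev's original circuit-to-Hamiltonian analysis, with the two parts of the fact proved by very different means.

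For the YES case, I will define the history state
\[
\histstate := \frac{1}{\sqrt{m+1}}\sum_{t=0}^{m}(U_t\cdots U_1)\ket{x}_A\ket{0\cdots 0}_B\otimes\ket{t}_C,
\]
where the clock register $C$ uses the unary encoding $\ket{t}=\ket{1^t 0^{m-t}}$. A direct verification shows that $\histstate$ lies in the kernel of $\hin$ (at clock value $0$, the state on $AB$ is the correct initial state $\ket{x}\ket{0\cdots 0}$), of $\hstab$ (the clock is always legally unary-encoded), and of each term $H_t$ of $\hprop$ (the propagation terms are designed precisely to annihilate the correctly-propagated history state). The only non-zero contribution to $\bra{\psihist}H'\histstate$ therefore comes from $\hout$, and a brief calculation gives $(1-P_{\mathrm{acc}})/(m+1)\le(1-\alpha)/(m+1)=\alpha'$, yielding the desired upper bound on $\lmin{H'}$.

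For the NO case, the plan is to lower-bound $\lmin{H'}$ via Kitaev's geometric lemma. First, split the Hilbert space as $S_{\mathrm{legal}}\oplus S_{\mathrm{legal}}^\perp$, where $S_{\mathrm{legal}}$ is the span of states whose clock is in a legal unary encoding; on $S_{\mathrm{legal}}^\perp$, $\hstab\succeq I$, reducing the analysis to $S_{\mathrm{legal}}$. Next, apply the change-of-basis unitary $W:=\sum_{t=0}^{m}(U_t\cdots U_1)\otimes\ketbra{t}{t}_C$ on $S_{\mathrm{legal}}$. A routine calculation shows $W^\dagger\hprop W=I_{AB}\otimes E_C$, where $E_C$ is the discrete path-graph Laplacian on $m+1$ vertices; its nullspace is spanned by states of the form $\ket{\psi}_{AB}\otimes\frac{1}{\sqrt{m+1}}\sum_t\ket{t}_C$, and its smallest non-zero eigenvalue is $\Omega(1/m^2)$.

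The main technical step is then to lower-bound $W^\dagger(\hin+\hout)W$ on $\ker(W^\dagger\hprop W)$. A short computation shows that its expectation against any nullspace state as above equals $\tfrac{1}{m+1}\bra{\psi}(2I-\Pi_{\mathrm{in}}-U^\dagger\ketbra{1}{1}_{\mathrm{out}}U)\ket{\psi}$, where $\Pi_{\mathrm{in}}$ is the rank-one projector onto the correct initial state $\ket{x}\ket{0\cdots 0}$. Using the standard identity $\lambda_{\max}(A+B)=1+\norm{AB}$ for two projectors, together with $\norm{\Pi_{\mathrm{in}}U^\dagger\ketbra{1}{1}_{\mathrm{out}}U}\le\sqrt{\beta}$ (since $\Pi_{\mathrm{in}}$ is rank one and the acceptance probability on its image is at most $\beta$ by soundness), this yields a penalty of at least $(1-\sqrt{\beta})/(m+1)$ on $\ker(W^\dagger\hprop W)$. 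Kitaev's geometric lemma then combines this $\Omega((1-\sqrt{\beta})/m)$ penalty with the $\Omega(1/m^2)$ spectral gap of the conjugated $\hprop$ to yield $\lmin{H'}\ge\Omega((1-\sqrt{\beta})/m^3)$. The main obstacle will be the two-projector argument that extracts the square-root $\sqrt{\beta}$ scaling rather than a looser $\beta$, together with the careful application of the geometric lemma needed to avoid losing extra factors of $m$ when combining the penalty and gap bounds.
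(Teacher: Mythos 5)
The paper states this as a cited Fact from Kitaev--Shen--Vyalyi and gives no proof of its own (it only re-derives the YES-case bound in passing, via exactly the history-state calculation you describe, inside the proof of Theorem \ref{th:hardness}). Your reconstruction is correct and is the standard argument from the cited reference: the history state for the upper bound, and for the lower bound the legal-clock restriction, conjugation by $W$ to turn $\hprop$ into a path Laplacian with $\Omega(1/m^2)$ gap, the two-projector identity $\norm{P+Q}=1+\norm{PQ}$ with a rank-one $\Pi_{\mathrm{in}}$ to extract the $1-\sqrt{\beta}$ penalty, and the Geometric Lemma to combine them into $\Omega((1-\sqrt{\beta})/m^3)$.
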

We state a helpful lemma lower bounding the non-zero eigenvalues on the frustration-free component of this construction.
\begin{lemma}[Lemma 2.2 of \cite{Gharibian+19} (based on\footnote{Alternatively, see Lemma 23 of the arXiv version \cite{GK12_2}.} Lemma 3 of \cite{GK12})]\label{l:GKgap}
    The smallest non-zero eigenvalue of $\hin+\hprop+\hstab$ is at least $\pi^2/(64m^3)$
for $m\geq 1$.
\end{lemma}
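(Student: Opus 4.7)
The plan is to combine three standard ingredients: the Feynman--Kitaev rotation that diagonalizes $\hprop$ into a path-graph Laplacian, the explicit spectrum of that Laplacian, and Kitaev's geometric lemma for combining positive-semidefinite operators. The target bound factorizes roughly as $\pi^2/m^2$ from the Laplacian gap times $\Omega(1/m)$ from the geometric lemma, up to explicit numerical constants that one must then track carefully to land on the stated $\pi^2/(64m^3)$.

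First, I would identify the ``legal clock'' subspace $\Ss_{\mathrm{leg}}\subseteq\Hh_C$ spanned by the $m{+}1$ unary states $\ket{\hat t}:=\ket{1^t 0^{m-t}}$ as exactly the kernel of $\hstab$, noting that the smallest nonzero eigenvalue of $\hstab$ is at least $1$. Both $\hin$ and $\hprop$ preserve $\Hh_{AB}\otimes\Ss_{\mathrm{leg}}$, and I would conjugate this block by the Feynman--Kitaev unitary $W=\sum_{t=0}^{m}(U_t\cdots U_1)_{AB}\otimes\ketbra{\hat t}{\hat t}_C$. Under $W$, $\hprop$ becomes $I_{AB}\otimes E_C$ with $E_C$ the Laplacian of the path graph on $m{+}1$ vertices, whose spectrum is $\{2\sin^2(k\pi/(2(m+1))):k=0,\dots,m\}$, while $\hin$ becomes $(I-\ketbra{x,0^p}{x,0^p})_{AB}\otimes\ketbra{\hat 0}{\hat 0}_C$. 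A case split on whether the $AB$-register is parallel or perpendicular to $\ket{x,0^p}$, together with the inequality $\sin\theta\ge 2\theta/\pi$ on $[0,\pi/2]$, then yields a smallest-nonzero-eigenvalue lower bound of order $\pi^2/(8(m+1)^2)$ for $\hin+\hprop$ restricted to $\Hh_{AB}\otimes\Ss_{\mathrm{leg}}$, with the one-dimensional kernel spanned by the history state.

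Next, I would promote this restricted bound to the full Hilbert space by applying Kitaev's geometric lemma to $H_1:=\hin+\hprop$ and $H_2:=\hstab$. Their common kernel equals the one-dimensional history-state space, and the lemma delivers a bound on the smallest nonzero eigenvalue of $H_1+H_2$ of the form $2\min(\lambda^*(H_1),\lambda^*(H_2))\sin^2(\theta/2)$, where $\lambda^*$ denotes the smallest nonzero eigenvalue and $\theta$ is the angle between $\ker(H_1)$ and $\ker(H_2)$ after quotienting out their intersection. Because the history state carries amplitude $1/\sqrt{m+1}$ on every clock value, the two kernels meet only at an angle satisfying $\sin^2(\theta/2)=\Omega(1/m)$. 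Combining this with $\lambda^*(H_1)=\Omega(1/m^2)$ from the previous step and $\lambda^*(H_2)\ge 1$ gives the desired $\Omega(1/m^3)$ lower bound.

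The main obstacle I expect is two-fold. First, the gap bound for $H_1$ that I establish lives on $\Hh_{AB}\otimes\Ss_{\mathrm{leg}}$, whereas Kitaev's lemma nominally wants $\lambda^*(H_1)$ on the full space; since $\hprop$ has additional kernel vectors in $\Hh_{AB}\otimes\Ss_{\mathrm{leg}}^\perp$, the cleanest workaround is to apply the geometric lemma in a nested form, pairing $\hin$ with $\hprop$ first on the legal subspace and folding in $\hstab$ afterwards, so that the restricted bound is all that is ever invoked. Second, recovering the precise constant $1/64$ (rather than a looser $\Omega(1/m^3)$) requires matching the exact version of the geometric lemma --- several variants with different prefactors circulate --- to the exact path-Laplacian bound and angle estimate, and then carefully collecting the factors $\pi^2/8$, $2$, and the $\Omega(1/m)$ angle contribution; this is arithmetic but easy to mismanage, and is why I would lean on the existing Gharibian--Kempe calculation to nail the constant down.
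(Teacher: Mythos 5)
The paper never proves this lemma: it is imported verbatim, with citation, from Lemma 2.2 of Gharibian et al.\ (itself based on Lemma 3 of Gharibian--Kempe), so there is no in-paper argument to compare against. Your sketch reconstructs the standard proof from those references --- rotate by the Feynman--Kitaev unitary, read off the path-graph Laplacian spectrum on the legal clock subspace, and combine kernels via the geometric lemma --- and at that level the strategy is the right one.

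Two things need fixing before this would stand as a proof. The concrete misstep is your treatment of $\hstab$. The bound $\sin^2(\theta/2)=\Omega(1/m)$ that you justify via the $1/\sqrt{m+1}$ clock amplitudes of the history state is the angle between $\ker(\hin)$ and $\ker(\hprop)$ \emph{inside} the legal clock subspace; it is not the angle between $\ker(\hin+\hprop)$ and $\ker(\hstab)$. Any additional kernel vectors of $\hin+\hprop$ supported on $\Hh_{AB}\otimes\Ss_{\mathrm{leg}}^{\perp}$ are exactly orthogonal to $\ker(\hstab)$, so that angle is $\pi/2$, and the geometric lemma would then require a bound on $\lambda^*(\hin+\hprop)$ over the \emph{full} space, which your legal-subspace analysis does not supply. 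You half-recognize this in your ``nested'' workaround, but the clean and standard fix is not a second invocation of the geometric lemma: all three terms preserve the splitting into $\Hh_{AB}\otimes\Ss_{\mathrm{leg}}$ and its complement, every illegal clock string contains a $01$ pattern so $\hstab\succeq I$ on the complement (where $\hin,\hprop\succeq 0$), and hence the smallest non-zero eigenvalue of the sum is the minimum of $1$ and the legal-block quantity, on which $\hstab$ vanishes. The geometric lemma is needed exactly once, for $\hin$ versus $\hprop$ on the legal block. Second, as you concede, you never actually derive the constant $\pi^2/64$; without carrying out the bookkeeping in $2v\sin^2(\theta/2)$ with $v=2\sin^2\!\big(\pi/(2(m+1))\big)$ and $\sin^2\theta=1/(m+1)$, your argument only yields $\Omega(1/m^3)$ --- which happens to be all the paper ever uses, since the lemma enters only through the term $c\Delta/M^3$ with $\Delta$ a free polynomial, but it is not the statement as written.
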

 Since $\alpha=1-2^{-n}$ and $\beta=2^{-n}$, we have that estimating $\lmin{H'}$ to inverse-polynomial precision suffices to decide $x$. In the YES case, the upper bound $\alpha'$ will be attained via the history state, which we now state:
 \begin{equation}\label{eqn:psihist}
    \ket{\psihist}=\frac{1}{\sqrt{m+1}}\sum_{t=0}^mU_t\cdots U_1\ket{x}_{A}\ket{0\cdots 0}_{B}\ket{t}_{C}.
\end{equation}

 \paragraph*{Our construction.} The rub is that $\GLHdes$ also requires us to produce, as part of the reduction, a semi-classical state $\ket{u}$ (as defined in Section \ref{sub:mainres}) with $\norm{\Pi_{H'}\ket{u}}\ge \delta$. Unfortunately, in the YES case, if $\alpha\neq 1$ then the history state $\ket{\psihist}$ is not necessarily an actual ground state of $H'$. Even worse, the NO case analysis goes via Kitaev's Geometric Lemma~\cite{Kitaev+02}, which says nothing about the ground state of $H$ (i.e., we do not even have the history state to guide us).

 To get around this, we apply three tricks. First, we apply a standard trick of weighing $\hin+\hprop+\hstab$ by a large polynomial pre-factor $\Delta$, to allow us to argue about the structure of the low energy space of $H'$. Second, we ``pre-idle'' the verifier, by which we mean we update $U$ to a $M:=(m+N)$-gate circuit, where the first $N$ gates are $I$, and the last $m$ gates are the original circuit $U$. (In contrast, ``idling'' usually refers to the same trick applied at the \emph{end} of $U$, so as to increase the amplitude on time step $m$ of the history state. Here, we increase the weight on step~$0$.) At this point, we have
 \begin{equation}
    H':=\Delta\hin+\Delta\hprop+\Delta\hstab+\hout
 \end{equation}
 for $\hprop$ using our new $M$-step $U$, and where we update our definitions of $\alpha'$ and $\beta'$ to
\[
\alpha':=\frac{1-\alpha}{M+1} \hspace{3mm}\textrm{ and } \hspace{3mm}\beta':=\Omega\left(\frac{1-\sqrt{\beta}}{M^3}\right).
\]
Third, we take inspiration from Ambainis' $\QMA$ query gadget construction~\cite{A14}, and define:
 \begin{eqnarray*}
    H &:=& \frac{\alpha'+\beta'}{2}I_{ABC}\otimes\ketbra{0}{0}_D + H'_{ABC}\otimes \ketbra{1}{1}_D,\\
    \ket{u} &:=& \ket{x}_A\ket{0\cdots 0}_{B}\left(\frac{1}{\sqrt{N}}\sum_{t=1}^{N}\ket{t}\right)_{C}\ket{+}_{D}.
 \end{eqnarray*}
 Since we will shortly set $N\in\poly(n)$, $\ket{u}$ is semi-classical (i.e., an equally weighted superposition of $\poly(n)$ strings), as desired. We remark that a similar idea of using block-encoding relative to a single qubit register (in our context, $D$) to force $H$ to encode $\QMA$ query answers in $D$ was used by~\cite{A14}, albeit in the unrelated context of class $\pQMAlog$. While our aim is not to study $\pQMAlog$, this type of gadget nevertheless will allow us to precisely pin down the ground space of $H$ in the NO case. (For clarity, $\ket{u}$ is specific to this paper, i.e., the concept of $\ket{u}$ is not relevant to~\cite{A14}.) Note $\ket{u}$ sums up to time step $N$, \emph{not} $M$, and starts the summation at $t=1$.

 Finally, set $k=6$, $a=\alpha'$, $b=(\alpha'+\beta')/2$, and
any $\delta\in (0,1/\sqrt{2}-\Omega(1/\poly(n)))$. Set $N$ as a large enough fixed polynomial so that $N/(2(M+1))\geq 1/4+\Delta'$ for some $\Delta'$ defined later ($\Delta'$ will scale as $1/4-1/\poly(n)$). We will also set $\Delta>0$ as needed. This completes the construction, which clearly runs in $\poly(n)$ time. As an aside, as defined, $H$ will have $\poly(n)$ norm, whereas $\GLHdes$ asks for $\norm{H}\leq 1$. This is easily corrected in the standard manner by dividing $H$ by any polynomial bound on $\norm{H}$ (obtained, e.g., by applying the triangle inequality for $\norm{\cdot}$ over the local terms of $H$). This scales down the thresholds $\alpha',\beta'$ (which is fine, since they remain inverse poly gapped), and leaves the eigenvectors invariant, which is important for our claims regarding $\norm{\Pi_{H}\ket{u}}\ge \delta$.

 \paragraph*{Correctness.} For clarity, recall $U$ is now a $M$-gate circuit, with $M=m+N$. \\

 \noindent\emph{YES case.} Suppose $x\in \Piyes$.
 We must show that $\lmin{H}\leq a$ (to satisfy the YES case requirement in $\GLHdes$) and $\norm{\Pi_{H}\ket{u}}\ge \delta$ (to satisfy the guiding state promise in $\GLHdes$).
 For the first claim, since $x\in \Piyes$, $U$ accepts $x$ with probability $\alpha$.
 Thus, we may upper bound $\lmin{H}$ using assignment $\ket{\psihist}_{ABC}\ket{1}_D$ as follows:
 \begin{eqnarray}
    \lmin{H}&\leq& \bra{\psihist}_{ABC}\bra{1}_DH\ket{\psihist}_{ABC}\ket{1}_D\nonumber\\
    &=&\bra{\psihist}_{ABC}H'\ket{\psihist}_{ABC}\nonumber\\
    &\leq&(1-\alpha)/(M+1)\nonumber\\
    &=&\alpha'\label{eqn:overlap},
 \end{eqnarray}
 where the second statement follows since $\hin\ket{\psihist}=\hprop\ket{\psihist}=\hstab\ket{\psihist}=0$, and the third by direct calculation (in the process re-deriving the first point of \Cref{fact:1}). (Recall we updated $m$ in Kitaev's original construction at the start of this discussion to $M$.) Thus, in this case $\lmin{H}=\lmin{H'}\leq a$, as desired.

 To next show $\norm{\Pi_{H}\ket{u}}\ge \delta$, observe that since we just showed $\lmin{H}\leq\alpha'$ (Equation (\ref{eqn:overlap})), and since $\beta'>\alpha'$, we have that $\lmin{H}$ lies in the $1$-block relative to $D$, i.e., $\lmin{H}$ is achieved only by an eigenvector of the form\footnote{In other words, $H$ is, up to permutation, block diagonal with respect to the standard basis on $D$, and $\lmin{H}$ lies in the $1$-block of this decomposition.} $\ket{\phi}_{ABC}\ket{1}_D$ for some $\ket{\phi}_{ABC}$.
 We claim that any such $\ket{\phi}_{ABC}$ must have large overlap on the history state $\ket{\psihist}_{ABC}$, which will in turn allow us to conclude that $\ket{\phi}_{ABC}$ has large overlap with $\ket{u}$. Since $\ket{\phi}_{ABC}\ket{1}_D$ was a ground state of $H$, this will allow us to conclude that $\norm{\Pi_{H}\ket{u}}\ge \delta$, i.e., that $\ket{u}$ is a valid guiding state.

  To begin, letting $\spa{N}$ denote the null space $\textup{Null}(\hin+\hprop+\hstab)$, recall that $\spa{N}$ is $1$-dimensional and spanned by $\ket{\psihist}$.\footnote{That $\spa{N}$ is spanned by history states is well-known from Kitaev's original analysis~\cite{Kitaev+02}; the $1$-dimensionality arises since there is no proof register in our setting, i.e., we are reducing from $\BQP$ rather than $\QMA$.} Let $\ket{\phi}$ be an arbitrary ground state of $H'$, implying $\ket{\phi}_{ABC}\ket{1}_D$ is a ground state of $H$. Write $\ket{\phi}=\eta_1\ket{\psihist}+\eta_2\ket{\psihist^\perp}$ for $\ket{\psihist}\in \spa{N}$, $\ket{\psihist^\perp}\in \spa{N^\perp}$, and $\abs{\eta_1}^2+\abs{\eta_2}^2=1$. Then, since $H'\succeq 0$ (since recall $\hin,\hprop,\hout,\hstab\succeq 0$ and $\Delta\geq 0$),
 \begin{eqnarray*}
    \alpha'&\geq& \bra{\phi}H'\ket{\phi}\\
    &\geq& \bra{\phi}\Delta(\hin+\hprop+\hstab)\ket{\phi}\\
    &=&\abs{\eta_1}^2\bra{\psihist}\Delta(\hin+\hprop+\hstab)\ket{\psihist}+\\
    &&\abs{\eta_2}^2\bra{\psihist^\perp}\Delta(\hin+\hprop+\hstab)\ket{\psihist^\perp}\\
    &=& \abs{\eta_2}^2\bra{\psihist^\perp}\Delta(\hin+\hprop+\hstab)\ket{\psihist^\perp}\\
    &\geq& \abs{\eta_2}^2\frac{c\Delta}{M^3}
 \end{eqnarray*}
for $c=\pi^2/64=\Omega(1)$, where the second statement holds since $\hout\succeq 0$ (and thus $H'\succeq \Delta(\hin+\hprop+\hstab)$), the third and fourth since $(\hin+\hprop+\hstab)\ket{\psihist}=0$, and the last by \Cref{l:GKgap}. We conclude that
\begin{equation}\label{eqn:step1}
    \abs{(\bra{\phi}_{ABC}\bra{1}_D)(\ket{\psihist}_{ABC}\ket{1}_D)}^2\geq 1-\frac{M^3\alpha'}{c\Delta},
\end{equation}
i.e., $\ket{\phi}_{ABC}$ has large overlap with $\ket{\psihist}$, as claimed.
At the same time, the following separate calculation shows that $\ket{\psihist}$ also has large overlap with $\ket{u}$:
\begin{equation}\label{eqn:step2}
    \abs{\bra{u}_{ABCD}(\ket{\psihist}_{ABC}\ket{1}_D)}^2=\frac{1}{2}\abs{\sum_{t=1}^{N}\frac{1}{\sqrt{N(M+1)}}}^2=\frac{N}{2(M+1)}\geq\frac{1}{4}+\Delta'.
\end{equation}
Via Equations (\ref{eqn:step1}) and (\ref{eqn:step2}) and the triangle inequality, we obtain our desired claim, i.e., that $\ket{\phi}_{ABC}\ket{1}_D$ has large overlap with $\ket{u}_{ABCD}$:
\begin{equation}\label{eqn:step3}
    \abs{(\bra{\phi}_{ABC}\bra{1}_D)\ket{u}_{ABCD}}^2\geq 1-\frac{1}{4}\left(2\sqrt{\frac{M^3\alpha'}{c\Delta}}+\sqrt{3-4\Delta'}\right)^2.
\end{equation}
Here, we have used the known identity $\trnorm{\ketbra{v}{v}-\ketbra{u}{u}}=2\sqrt{1-\abs{\braket{u}{v}}^2}$, and applied the triangle inequality to the trace norm.

Finally, we observe that all parameters in Equation (\ref{eqn:step3}) can be set so as to yield $\norm{\Pi_{H}\ket{u}}\ge \delta$.
Specifically, for sufficiently large fixed polynomial $\Delta$, the term $2\sqrt{\frac{M^3\alpha'}{c\Delta}}$ can be made arbitrarily inverse polynomially close to $0$. As for $\Delta'\in (0,1/4)$, choosing $\Delta'=1/4-1/q$ for some sufficiently large polynomial $q$, the term $\sqrt{3-4\Delta'}$ can be made to scale like $\sqrt{2+\epsilon}$ for $\epsilon>0$ any desired inverse polynomial. Taking into account the square, and combining with the $\Delta$ scaling, we conclude that for sufficiently large fixed polynomials $\Delta,\Delta'$, the right hand side above is at least $1-\frac{1}{4}(2+\epsilon')$ for any desired inverse polynomial $\epsilon'$. By choosing $\epsilon'$ small enough, this expression is at least $\delta^2$, from which it follows\footnote{The $\delta^2$ drops to $\delta$ here since we want a lower bound on $\norm{\Pi_{H}\ket{u}}$, not $\norm{\Pi_{H}\ket{u}}^2$.} that $\norm{\Pi_{H}\ket{u}}\ge \delta$, as desired. \\

 \noindent\emph{NO case.} Suppose $x\in \Pino$. We must show that $\lmin{H}\geq b$ (to satisfy the NO case requirement in $\GLHdes$) and $\norm{\Pi_{H}\ket{u}}\ge \delta$ (to satisfy the guiding state promise in $\GLHdes$).
 For the first claim, observe first that
 \begin{equation*}
    H'\succeq \hin+\hprop+\hstab+\hout\succeq \Omega\left(\frac{1-\sqrt{\beta}}{M^3}\right)=\beta',
 \end{equation*}
 where the first inequality follows since $\hin,\hprop,\hstab\succeq 0$ and $\Delta\geq 1$, and the second from \Cref{fact:1}. Since $\beta'>(\alpha'+\beta')/2$, we conclude that $\lmin{H}$ lies in the $0$-block relative
to $D$, implying
 \begin{equation*}
    \lmin{H}=\frac{\alpha'+\beta'}{2}=b,
 \end{equation*}
 as desired.
This also now yields the second claim regarding $\ket{u}$ --- by construction \emph{every} vector in the $0$-block is a ground state of $H$, including unit vector
\[
\ket{x}_A\ket{0\cdots 0}_{B}\left(\frac{1}{\sqrt{N}}\sum_{t=1}^{N}\ket{t}\right)_{C}\ket{0}_{D},
\]
 which has overlap $1/\sqrt{2}$ with $\ket{u}$, as claimed.
\end{proof}

\section{Dequantizing the Quantum Singular Value Transformation}\label{sec:SVT}
In this section we show how to dequantize quantum algorithms for the Singular Value Transformation when the input matrices are sparse, and when dealing with constant precision.
In Section \ref{sec:main} we present our main dequantization technique. In Section \ref{sec:singular} we apply this technique to construct a classical algorithm for estimating the singular values of a matrix given sampling-access to a state that has some overlap with the corresponding subspace. Finally, in Section \ref{sub:proofthclassical} we apply this singular value estimation algorithm to prove Theorem \ref{th:classical}.

In this section, the notation $\myO(\cdot)$ removes from the complexities  factors polynomial in $\log(MN)$, which mainly arise from the cost of accessing vectors and matrices using query-access and sampling-access as defined in Definitions~\ref{def:query-access}, \ref{def:sparsequery} and~\ref{def:sample}.

\subsection{Main technique}\label{sec:main}
In this subsection we describe our main technical tool. Concretely, we consider the following problem, which we denote by $\EST(s,\epsilon,\zeta)$, for an integer $s\ge 2$ and two parameters $\epsilon\in (0,1]$, $\zeta\in [0,1)$

\begin{center}
\fbox{
\begin{minipage}{15.1 cm} \vspace{2mm}

\noindent$\EST(s,\epsilon,\zeta)$\\\vspace{-3mm}

\noindent\hspace{3mm} Input: query-access to an $s$-sparse matrix $A\in\Comp^{M\times N}$ with $\norm{A}\le 1$

\noindent\hspace{15mm}
query-access to a vector $u\in \Comp^{N}$ such that $\norm{u}\le 1$

\noindent\hspace{15mm}
$\zeta$-sampling-access to a vector $v\in \Comp^{N}$ such that  $\norm{v}\le 1$

\noindent\hspace{15mm}
even polynomial $P\in\Real[x]$ of degree $2d$ with $|P(x)|\le 1$ $~\forall x\in[-1,1]$
\vspace{1mm}

\noindent\hspace{3mm} Output: an estimate $\hat{z}\in \Comp$ such that $|\hat z-v^\dagger P(\sqrt{A^\dagger A})u|\le \epsilon$
\vspace{2mm}
\end{minipage}
}
\end{center}

We show the following result.
\begin{theorem}[Formal version of \Cref{th:classical-est}]\label{th:classical-est-formal}
For any $s\ge 2$ and any $\epsilon\in(0,1]$, the problem $\EST(s,\epsilon,\zeta)$ can be solved classically with probability at least $1-1/\poly(N)$ in $\myO(s^{2d}/\epsilon^2)$ time for any $\zeta\le \epsilon/8$.
\end{theorem}

To prove Theorem \ref{th:classical-est-formal}, we will use the following lemma.
\begin{lemma}\label{lemma2}
Let $P\in\Real[x]$ be an even polynomial of degree $2d$. There is a $\myO(s^{2d})$-time classical procedure that given
\begin{itemize}
\item
query-access to an $s$-sparse matrix $A\in \Comp^{M\times N}$,
\item
query-access to a vector $u\in \Comp^{N}$,
\item
an index $i\in\{1,\ldots,N\}$,
\end{itemize}
outputs the $i$-th entry of $P(\sqrt{A^\dagger A})u$.
\end{lemma}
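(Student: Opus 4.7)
My plan is to exploit the evenness of $P$ to reduce to a recursive computation of matrix-vector entries for powers of $A^\dagger A$. Writing $P(x)=\sum_{k=0}^{d} a_{2k}x^{2k}$, we have
\[
P(\sqrt{A^\dagger A})=\sum_{k=0}^{d} a_{2k}(A^\dagger A)^k,
\]
so it suffices to compute, for each $k\in\{0,\ldots,d\}$, the $i$-th entry of the vector $w_k:=(A^\dagger A)^k u$ and sum up $a_{2k}(w_k)_i$ with the scalar coefficients of $P$.

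The heart of the argument is a recursive subroutine $\textsc{Entry}(k,i)$ returning $(w_k)_i$. For $k=0$, we simply query $u_i$ via $\Qq_u$ in $\myO(1)$ time. For $k\ge 1$, observe that
\[
(w_k)_i \;=\; \bigl(A^\dagger A\, w_{k-1}\bigr)_i \;=\; \sum_{\ell} \overline{A_{\ell i}} \sum_{r} A_{\ell r}\, (w_{k-1})_r.
\]
Because $A$ is $s$-sparse, only at most $s$ indices $\ell$ yield $A_{\ell i}\neq 0$, and these can be enumerated by calling $\Qq_A^{\mathrm{col}}(i,\ell)$ for $\ell=1,\ldots,s$. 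For each such $\ell$, the inner sum $\sum_r A_{\ell r}(w_{k-1})_r$ has at most $s$ non-zero summands, located via $\Qq_A^{\mathrm{row}}(\ell,\cdot)$, and each summand requires one recursive call $\textsc{Entry}(k-1,r)$. Hence one call to $\textsc{Entry}(k,i)$ triggers at most $s^2$ calls to $\textsc{Entry}(k-1,\cdot)$ plus $\myO(s^2)$ arithmetic and query operations.

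Unrolling the recursion, the number of leaf calls $\textsc{Entry}(0,\cdot)$ invoked to compute a single $(w_d)_i$ is at most $s^{2d}$, and the total arithmetic cost follows the same geometric series, giving a runtime of $\myO(s^{2d})$ per entry of $w_d$. The analogous bound $\myO(s^{2k})$ holds for $(w_k)_i$, so producing all $d+1$ required entries and combining them as $\sum_{k=0}^d a_{2k}(w_k)_i$ still costs $\myO(s^{2d})$ overall, since the geometric sum is dominated by its top term.

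There is no real conceptual obstacle; the only thing to watch is accounting correctly for the $\myO(\cdot)$ notation, which absorbs the $\poly(\log(MN))$ cost of each query to $\Qq_A^{\mathrm{row}}$, $\Qq_A^{\mathrm{col}}$, and $\Qq_u$, as well as the $O(d)$ additions and multiplications of precomputed scalars at the end. I will not attempt to cache repeated subcalls (worst-case indices may all be distinct, and any savings would not improve the claimed bound). This yields the stated $\myO(s^{2d})$ procedure.
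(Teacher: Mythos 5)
Your proposal is correct and follows essentially the same approach as the paper: decompose $P$ into powers of $A^\dagger A$ and compute $\bigl((A^\dagger A)^k u\bigr)_i$ recursively via sparsity, giving the $\myO(s^{2d})$ bound. The only cosmetic difference is that you merge one application of $A^\dagger$ and one of $A$ into a single recursive step with $s^2$ branching, whereas the paper's recursion peels off one sparse factor at a time from the $2k$-fold product $A^\dagger A\cdots A^\dagger A$; the two bookkeeping schemes are interchangeable.
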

\begin{proof}
Let us write
\[
P(x)=a_0+a_2x^2+a_4x^4+\cdots+ a_{2d}x^{2d},
\]
where $2d$ is the degree of $P$ and $a_0,\ldots,a_{2d}\in\Real$. We have
\[
P(\sqrt{A^\dagger A})=a_0I_{2^n}+a_2A^\dagger A+a_4(A^\dagger A)^2+\cdots+a_{2d}(A^\dagger A)^d
\]
and thus
\[
P(\sqrt{A^\dagger A})u=a_0u+a_2A^\dagger Au+a_4(A^\dagger A)^2u+\cdots+a_{2d}(A^\dagger A)^du.
\]
We show below how to compute the $i$-th entry of each term $(A^\dagger A)^ru$, for $r\in\{1,\ldots,d\}$.

\paragraph*{The procedure.} Since $(A^\dagger A)^ru$ involves products of $r$ matrices, we give a more general recursive procedure that receives as input
\begin{itemize}
\item
query-access to $s$-sparse matrices $B^{[1]}$, \ldots, $B^{[r]}$, where $B^{[j]}\in\Comp^{M_j\times N_j}$ for each $j\in\{1,\ldots,r\}$, which satisfy the following conditions: $N_j=M_{j+1}$ for each $j\in\{1,\ldots,r-1\}$, and $N_{r}=N$;
\item
query-access to a vector  $u\in \Comp^{N}$;
\item
an index $i\in\{1,\ldots,N\}$,
\end{itemize}
and outputs the $i$-th entry of the vector $B^{[1]}\cdots B^{[r]} u$.

The procedure repeatedly applies the following idea to recursively decompose the product $B^{[1]}\cdots B^{[r]}$: to obtain the $i$th entry of $B^{[1]}\cdots B^{[r]}u$, it suffices to know only the $s$ non-zero entries of row $i$ of $B^{[1]}$ (which can be queried directly), together with the corresponding $s$ entries in column vector $B^{[2]}\cdots B^{[r]}u$ (which are computed recursively).

Formally, the procedure first queries all the entries of the $i$-th row of $B^{[1]}$, and gets at most $s$ non-zero entries and their positions, which we write $j_1,\ldots,j_\ell$, with $\ell\le s$.
The procedure then computes recursively the $j_q$-th entry of the product $B^{[2]}\cdots B^{[r]} u$, for each $q\in\{1,\ldots,\ell\}$.
Let $y_q$ denote the value obtained, for each $q\in\{1,\ldots,\ell\}$. The procedure finally outputs the value
\[
B^{[1]}_{i,j_1}y_1+\cdots+B^{[1]}_{i,j_\ell}y_\ell,
\]
 which is equal to the $i$-th entry of the vector $B^{[1]}\cdots B^{[r]} u$.

\paragraph*{Runtime.} Let $\Tt(r)$ denote the running time of this procedure. We have
\[
\Tt(r)\le s\cdot O(\poly(\log (MN)))+s\Tt(r-1)+s,
\]
 and thus $\Tt(r)=O(s^r\cdot \poly(\log(MN)))$.

The complexity of the whole procedure is
\[
O((s^2+\cdots+s^{2d})\cdot \poly(\log(MN)))=\myO(s^{2d}),
\]
as claimed.
\end{proof}

\begin{proof}[Proof of Theorem \ref{th:classical-est-formal}]
To compute an estimate of $v^\dagger P(\sqrt{A^\dagger A})u$ we will use the following approach (similar to Proposition 4.2 in \cite{TangSTOC19}). Remember that we have sampling-access to $v\in \Comp^{2^n}$. Let $(\Qq_v,\Sq_v,m)$ denote the triple from Definition \ref{def:sample}.
Let us write $w=P(\sqrt{A^\dagger A})u$. Consider the following procedure:
\begin{itemize}
\item[1.]
Using Algorithm $\Sq_v$, sample one index $j\in\{1,\ldots,N\}$;
\item[2.]
Using Algorithm $\Qq_v$, query the corresponding entry $v_j$;
\item[3.]
Compute the value $w_j$ using the algorithm of Lemma \ref{lemma2};
\item[4.]
Output the value $\frac{w_jm^2}{v_j}$.
\end{itemize}
Let $X$ denote the complex random variable corresponding to the output of this procedure. Let $\real(X)$ and $\comp(X)$ denote its real part and complex part, respectively.
We have
\begin{align*}
\E[X]&=\sum_{i=1}^{N} \frac{w_im^2}{v_i}p(i)
\end{align*}
where $m\in[(1-\zeta)\norm{v},(1+\zeta)\norm{v}]$ and
\[
p(i)\in\left[(1-\zeta)\frac{|v_i|^2}{\norm{v}^2},(1+\zeta)\frac{|v_i|^2}{\norm{v}^2}\right]
\]
for each $i\in\{1,\ldots,N\}$.
Note that the inequalities
\begin{align*}
m^2p(i)&\le (1+\zeta)^3 |v_i|^2\le (1+7\zeta)|v_i|^2,\\
m^2p(i)&\ge(1-\zeta)^3 |v_i|^2\ge (1-4\zeta)|v_i|^2,\\
\end{align*}
and thus also
\[
\abs{m^2p(i) - |v_i|^2}\le 7\zeta |v_i|^2
\]
hold for each $i\in\{1,\ldots,N\}$.
We thus have
\begin{align*}
\left|\E[X]-v^\dagger P(\sqrt{A^\dagger A}) u\right|=\left|\sum_{i=1}^{N} \frac{w_im^2}{v_i}p(i)-\sum_{i=1}^{N} v^\ast_iw_i
\right|
&=\left|\sum_{i=1}^{N} \frac{w_i}{v_i}\left(m^2p(i)-|v_i|^2\right)\right|\\
&\le\sum_{i=1}^{N} \left|\frac{w_i}{v_i}\right|
\cdot \left|m^2p(i)-|v_i|^2\right|\\
&\le
7\zeta\sum_{i=1}^{N} |w_i||v_i|\\
&\le
7\zeta \norm{w}\norm{v}\\
&\le7\zeta \norm{P(\sqrt{A^\dagger A})}\norm{u}\norm{v}\\
&\le7\zeta.
\end{align*}
We have
\begin{align*}
\var[\real(X)]&=\E[\real(X)^2]-\E[\real(X)]^2\\
&\le \E[\real(X)^2]\\
&\le\sum_{i=1}^{N} \left(\frac{|w_i|m^2}{|v_i|}\right)^2p(i)\\
&\le (1+7\zeta)\sum_{i=1}^{N} |w_i|^2m^2\\
&\le (1+7\zeta)(1+\zeta)^2\norm{w}^2\norm{v}^2\\
&\le (1+7\zeta)^2
\end{align*}
and similarly $\var[\comp(X)]\le (1+7\zeta)^2$. This procedure can be implemented in
\[
O(\poly(\log N))+\myO(s^{2d})=\myO(s^{2d})
\]
time.

Apply the above procedure $r$ times, each time getting a complex number $X_1$, $\ldots$, $X_r$, and output the mean. Let $Z=\frac{X_1+\cdots+X_r}{r}$ denote the corresponding complex random variable, and let  $\real(Z)$ and $\comp(Z)$ denote its real part and complex part, respectively. Since the variables $X_1$, $\ldots$, $X_r$ are independent, we have
\begin{align*}
\E[\real(Z)]&=\E[\real(X)],\\
\var[\real(Z)]&=\frac{1}{r^2}\left(\var[\real(X_1)]+\cdots+\var[\real(X_r)]\right) = \frac{\var[\real(X)]}{r}\le (1+7\zeta)^2/r,
\end{align*}
and similarly $E[\comp(Z)]=\E[\comp(X)]$ and $\var[\comp(Z)]\le (1+7\zeta)^2/r$.
By Chebyshev's inequality we obtain
\begin{align*}
\Pr\left[\big|\real(Z)-\E[\real(X)]\big|\ge \frac{\epsilon-7\zeta}{\sqrt{2}}\right]&\le \frac{2\cdot\var[\real(Z)]}{(\epsilon-7\zeta)^2}\le \frac{2(1+7\zeta)^2}{r(\epsilon-7\zeta)^2},\\
\Pr\left[\big|\comp(Z)-\E[\comp(X)]\big|\ge \frac{\epsilon-7\zeta}{\sqrt{2}}\right]&\le \frac{2\cdot\var[\comp(Z)]}{(\epsilon-7\zeta)^2}\le \frac{2(1+7\zeta)^2}{r(\epsilon-7\zeta)^2},
\end{align*}
and thus
\begin{align*}
\Pr\left[\big|Z-v^\dagger P(\sqrt{A^\dagger A}) u|\le \epsilon \right]&\ge\Pr\left[\big|Z-\E[X]|\le \epsilon-7\zeta \right]\\
&\ge 1-\Pr\left[\big|\real(Z)-\E[\real(X)]\big|\ge \frac{\epsilon-7\zeta}{\sqrt{2}}\right]\\
&\hspace{13mm}- \Pr\left[\big|\comp(Z)-\E[\comp(X)]\big|\ge \frac{\epsilon-7\zeta}{\sqrt{2}}\right]\\
&\ge 1-\frac{4(1+7\zeta)^2}{r(\epsilon-7\zeta)^2}.
\end{align*}
For $\zeta\le \epsilon/8$, taking $r=\Theta(1/\epsilon^2)$ guarantees that the estimate is an $\epsilon$-additive-approximation of $v^\dagger P(\sqrt{A^\dagger A}) u$ with probability at least $3/4$. Using Lemma \ref{lemma:powering} for each the real part and the imaginary part of the estimate, the success probability can then be improved to $1-1/\poly(N)$ by repeating this process $O(\log N)$ times.
\end{proof}

\subsection{Singular Value Estimation by dequantizing the QSVT}\label{sec:singular}
In this subsection we consider the following problem, which we denote  $\SV(s,t_1,t_2,\theta_1,\theta_2,\delta,\zeta)$, for an integer $s\ge 2$, real numbers $t_1,t_2,\theta_1,\theta_2\in[0,1]$ satisfying the inequalities $\theta_1\le t_1<t_2\le 1-\theta_2$, and parameters $\delta\in(0,1]$ and $\zeta\in[0,1)$.
This problem asks to decide if there exist singular values in a given interval, given sampling-access to a state that has some overlap with the corresponding subspace.

\begin{center}
\fbox{
\begin{minipage}{15 cm} \vspace{2mm}
\noindent$\SV(s,t_1,t_2,\theta_1,\theta_2,\delta,\zeta)$ \hspace{3mm}(singular value estimation) \\\vspace{-3mm}

\noindent\hspace{3mm} Input: query-access to an $s$-sparse matrix $A\in\Comp^{M\times N}$ with $\norm{A}\le 1$

\noindent\hspace{15mm}
$\zeta$-sampling-access to a vector $u\in \Comp^{N}$ such that $\norm{u}\le 1$

\vspace{2mm}

\noindent\hspace{3mm}
Promise: (i) $A$ has a singular value in the interval $[t_1,t_2]$, and $\norm{\Pi^{[t_1,t_2]}_Au}\ge \delta$ holds, or

\noindent\hspace{19mm}
(ii) $A$ has no singular value in the interval $(t_1-\theta_1,t_2+\theta_2)$
\vspace{2mm}

\noindent\hspace{3mm} Goal: decide which of (i) or (ii) holds
\vspace{2mm}
\end{minipage}
}
\end{center}
Note that the condition about $u$ is only imposed in the case (i): for the case (ii) we do not require~$u$ to have any overlap with a specified subspace.

Our main result is the following theorem.
\begin{theorem}[Formal version of \Cref{th:SVT}]\label{th:SVT-formal}
For any $t_1,t_2,\theta_1,\theta_2\in[0,1]$ such that $\theta_1\le t_1<t_2\le 1-\theta_2$, any $\delta\in(0,1]$
and any $s\ge 2$, the problem $\SV(s,t_1,t_2,\theta_1,\theta_2,\delta,\zeta)$ can be solved classically with probability at least $1-1/\poly(N)$ in
\[
\myO\left(\frac{s^{c\cdot\left(1/\theta_1+1/\theta_2\right)\log(\sqrt{3}/\delta)}}{\delta^4}\right)
\]
 time, for some universal constant $c$, for any $\zeta\le \delta^2/56$.
\end{theorem}

In order to prove Theorem \ref{th:SVT-formal}, we will need the following lemma, which can be considered as a variant of Lemma 29 in \cite{Gilyen+STOC19}.

\begin{lemma}\label{lemma1}
For any $t_1,t_2,\theta_1,\theta_2\in[0,1]$ such that $\theta_1\le t_1<t_2\le 1-\theta_2$ and any $\chi\in(0,1)$, there exists an even polynomial $P\in \Real[x]$ of degree
\[
O\left(\left(\frac{1}{\theta_1}+\frac{1}{\theta_2}\right)\log(1/\chi)\right)
\]
such that $|P(x)|\le 1$ for all $x\in[-1,1]$ and:
\[
\begin{cases}
P(x)\in [1-\chi,1]& \textrm{ if } x\in[t_1,t_2],\\
P(x)\in [0,\chi]& \textrm{ if } x\in[0,t_1-\theta_1]\cup [t_2+\theta_2,1].\\
\end{cases}
\]
\end{lemma}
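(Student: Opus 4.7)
The plan is to build $P$ as the product of two even polynomial factors $R_1R_2$: a ``high-pass'' $R_1$ approximating $\mathbf{1}_{\{|x|\ge t_1\}}$ with transition width $\theta_1$, and a ``low-pass'' $R_2$ approximating $\mathbf{1}_{\{|x|\le t_2\}}$ with transition width $\theta_2$. If each $R_i$ is even, $[0,1]$-valued, and of degree $O((1/\theta_i)\log(1/\chi))$, then their product is automatically even and bounded by $1$, the three desired range conditions on $[t_1,t_2]$, $[0,t_1-\theta_1]$ and $[t_2+\theta_2,1]$ follow by multiplying the per-factor step behaviors, and the two degrees add to the claimed bound $O((1/\theta_1+1/\theta_2)\log(1/\chi))$.

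For each factor I would start from the standard polynomial approximation of the sign function (e.g.\ Corollary~16 of \cite{Gilyen+STOC19}, or \cite{Low+17}): for every $\theta'\in(0,1)$ and $\eta\in(0,1/2)$ there is an odd polynomial $S(y)$ of degree $O((1/\theta')\log(1/\eta))$ with $|S(y)|\le 1$ on $[-1,1]$ and $|S(y)-\sgn(y)|\le\eta$ on $[-1,-\theta']\cup[\theta',1]$. A linear rescaling $y=(x-t)/\max(1-t,1+t)$ then yields, for any $t\in[0,1]$ and any $\theta'\le t$, a polynomial $S_t(x)$ of the same order of degree, bounded by $1$ on $[-1,1]$, satisfying $|S_t(x)-\sgn(x-t)|\le\eta$ on $[-1,t-\theta']\cup[t+\theta',1]$.

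The crucial step is to symmetrize. Using the identity $\mathbf{1}_{\{|x|\ge t\}}=\tfrac{1}{2}\bigl(1-\sgn(x-t)\sgn(-x-t)\bigr)$ together with the oddness of $S$, I would define
\[
    \tilde R_{t,\theta'}(x) := \tfrac{1}{2}\bigl(1 - S_t(x)\cdot S_t(-x)\bigr).
\]
This polynomial is even (the substitution $x\mapsto -x$ leaves it invariant), of degree $O((1/\theta')\log(1/\eta))$, and lies in $[0,1]$ on $[-1,1]$ because $|S_t(\pm x)|\le 1$. The elementary inequality $|ab-a'b'|\le|a-a'|+|b-b'|$ (valid when all four quantities have absolute value at most $1$) then gives $|\tilde R_{t,\theta'}(x)-\mathbf{1}_{\{|x|\ge t\}}|\le 2\eta$ whenever $|x|\notin(t-\theta',t+\theta')$. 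Applying this with $(t,\theta')=(t_1-\theta_1/2,\theta_1/2)$ and $\eta=\chi/4$ produces $R_1$, and $R_2:=1-\tilde R_{t_2+\theta_2/2,\theta_2/2}$ is the matching low-pass factor; the hypotheses $\theta_1\le t_1$ and $t_2\le 1-\theta_2$ are precisely what makes the two rescalings legitimate. Setting $P:=R_1R_2$ and multiplying the per-factor bounds yields $P\ge(1-\chi/2)^2\ge 1-\chi$ on $[t_1,t_2]$ and $P\le\chi/2\le\chi$ on each of the two zero intervals, completing the argument.

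The main obstacle, in my view, is precisely producing an \emph{even} polynomial of degree $O(1/\theta)$ rather than $O(1/\theta^2)$: the naive substitution $y=x^2$ turns a transition of width $\theta$ at $t$ in the variable $x$ into one of width $\Theta(\max(t,\theta)\theta)$ in $y$, which under the hypothesis $\theta\le t$ can degrade all the way down to $\Theta(\theta^2)$. The identity $\sgn(x^2-t^2)=-\sgn(x-t)\sgn(-x-t)$ combined with the symmetrized product $S_t(x)S_t(-x)$ keeps the expensive approximation step at the level of the odd sign polynomial, where the sharp $O((1/\theta)\log(1/\eta))$ bound is available, and thereby avoids this loss.
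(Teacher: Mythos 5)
Your proposal is correct, achieves the stated degree bound, and takes a genuinely different route from the paper's proof. The paper also starts from the Low--Chuang odd sign polynomial, but then \emph{averages} two shifted copies, $Q(x) \propto P'_1(x-t_1+\theta_1/2) + P'_2(-x+t_2+\theta_2/2)$, to form a one-sided rectangular bump, and only afterwards symmetrizes by setting $P(x) \propto Q(x)+Q(-x)$. You instead build evenness into each factor from the outset via the identity $\mathbf{1}_{\{|x|\ge t\}} = \tfrac{1}{2}\bigl(1-\sgn(x-t)\sgn(-x-t)\bigr)$, converting the odd sign polynomial into even high-pass and low-pass filters $R_1, R_2$, and then \emph{multiply} them. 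Both constructions have the same asymptotic degree; what the product decomposition buys you is cleaner bookkeeping, since each $R_i$ is manifestly even and $[0,1]$-valued, so the evenness, boundedness, and three-case range behavior of $P=R_1R_2$ all fall out by multiplying per-factor bounds, whereas the paper's averaging approach needs a separate check of $Q$'s range on $[-1,0]$ before the final symmetrization. One minor imprecision in your write-up: the hypothesis $t_2 \le 1-\theta_2$ is not needed to make the sign-polynomial rescaling for $R_2$ ``legitimate'' (the constraint $\theta'\le t$ is automatic there); it is only needed so that the interval $[t_2+\theta_2,1]$ is nonempty, i.e., so the low-pass condition is meaningful. Your closing observation about why a naive $y=x^2$ substitution would degrade the transition width to $\Theta(\theta^2)$, and why the product identity sidesteps this, correctly isolates the nontrivial point of the lemma.
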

\begin{proof}
Low and Chuang \cite{Low+17} (see also Lemma 25 in \cite{Gilyen+STOC19}) showed that for all $\eta>0$, $\xi\in(0,1/2)$, there exists an efficiently computable odd polynomial $P'\in\Real[x]$ of degree $O(\log(1/\xi)/\eta)$ such that:
\[
\begin{cases}
P'(x)\in [-1,1]& \textrm{ for all } x\in[-2,2],\\
P'(x)\in [-1,-1+\xi]& \textrm{ if } x\in[-2,-\eta],\\
P'(x)\in [1-\xi,1]& \textrm{ if } x\in[\eta,2].\\
\end{cases}
\]

Let $P_1'$ be the polynomial obtained by this construction with $\eta=\theta_1/2$ and $\xi=2\chi/5$, and $P_2'$ be the polynomial obtained by this construction with $\eta=\theta_2/2$ and $\xi=2\chi/5$. Define
\[
Q(x)=(1-\xi)\frac{P_1'(x-t_1+\theta_1/2)+P_2'(-x+t_2+\theta_2/2)}{2}+\xi.
\]
This polynomial satisfies the following properties:
\begin{itemize}
\item
for all $x\in [t_1,t_2]$,
\[
Q(x)\in\left[(1-\xi)\frac{1-\xi + 1-\xi}{2}+\xi, (1-\xi)\frac{1+1}{2}+\xi\right]\subseteq [1-\xi,1];
\]
\item
for all $x\in [0,t_1-\theta_1]\cup[t_2+\theta_2,1]$,
\[
Q(x)\in\left[(1-\xi)\frac{1-\xi - 1}{2}+\xi, (1-\xi)\frac{1+(-1+\xi)}{2}+\xi\right]\subseteq [0,3\xi/2].
\]
\end{itemize}
Additionally, using similar calculations, it can be checked that
$Q(x)\in[0,1]$ for all $x\in [0,1]$, and
$Q(x)\in[0,\xi]$ for all $x\in [-1,0]$.

Finally, set
\[
P(x)=\frac{Q(x)+Q(-x)}{1+\xi}.
\]
This is an even polynomial of degree $O\left(\left(1/\theta_1+1/\theta_2\right)\log(1/\chi)\right)$, which satisfies the conditions
\begin{itemize}
\item
$P(x)\in[0,1]$ for all $x\in[-1,1]$;
\item
for all $x\in [t_1,t_2]$,
\[
P(x)\in\left[\frac{1-\xi+0}{1+\xi}, \frac{1+\xi}{1+\xi}\right]\subseteq [1-2\xi,1]\subseteq[1-\chi,1];
\]
\item
for all $x\in [0,t_1-\theta_1]\cup[t_2+\theta_2,1]$,
\[
P(x)\in\left[\frac{0+0}{1+\xi}, \frac{3\xi/2+\xi}{1+\xi}\right]\subseteq [0,5\xi/2]= [0,\chi];
\]
\end{itemize}
as required.
\end{proof}

\begin{proof}[Proof of Theorem \ref{th:SVT-formal}]
Let $P$ denote the polynomial from Lemma \ref{lemma1} for the value  $\chi=\delta^2/3$. Observe that the degree of $P$ is $O((1/\theta_1+1/\theta_2)\log(\sqrt{3}/\delta))$.

Let
\[
A=\sum_{i=1}^{N} \sigma_i u_i v_i^\dagger
\]
be a singular value decomposition of $A$. Let $\Lambda_{[t_1,t_2]}$ be the set of indices $i\in\{1,\ldots,N\}$ such that $\sigma_i\in [t_1,t_2]$.

Let us decompose $u$ in the basis $\{v_i\}$:
\[
u=\sum_{i=1}^{N} \alpha_i v_i,
\]
where $\alpha_i\in\Comp$ for each $i\in\{1,\ldots,N\}$.
Remember that by assumption we have $\sum_{i=1}^{N}|\alpha_i|^2\le 1$.
We have
\begin{align*}
P(\sqrt{A^\dagger A})u&=\left(\sum_{i=1}^{N} P(\sigma_i) v_i v_i^\dagger \right)\left(\sum_{i=1}^{N} \alpha_i v_i\right)\\
&=\sum_{i=1}^{N} P(\sigma_i) \alpha_i v_i.
\end{align*}
Observe that $u^\dagger P(\sqrt{A^\dagger A}) u=\sum_{i=1}^{N} P(\sigma_i) |\alpha_i|^2$ is a real number.

If $A$ has a singular value in the interval $[t_1,t_2]$, then the promise guarantees that the inequality
\[
\sum_{i\in\Lambda_{[t_1,t_2]}}|\alpha_i|^2\ge \delta^2
\]
holds.
We thus have
\begin{align*}
u^\dagger P(\sqrt{A^\dagger A}) u&=\sum_{i=1}^{N} P(\sigma_i) |\alpha_i|^2
\ge \sum_{i\in \Lambda_{[t_1,t_2]}} P(\sigma_i) |\alpha_i|^2
\ge \sum_{i\in \Lambda_{[t_1,t_2]}} (1-\chi)|\alpha_i|^2
\ge (1-\chi)\delta^2\ge 2\delta^2/3.
\end{align*}

If $A$ has no singular value in the interval $(t_1-\theta_1,t_2+\theta_2)$, then we have
\begin{align*}
u^\dagger P(\sqrt{A^\dagger A})  u&=\sum_{i=1}^{N} P(\sigma_i) |\alpha_i|^2
\le \sum_{i=1}^{N} \chi|\alpha_i|^2\le \chi \le \delta^2/3.
\end{align*}

Using the algorithm $\EST(s,\epsilon,\zeta)$ on input $(A,u,u,P)$ with $\epsilon=\delta^2/7$,
Theorem~\ref{th:classical-est-formal} guarantees that for $\zeta\le \epsilon/8=\delta^2/56$,
we can distinguish with probability at least $1-1/\poly(N)$ between the two cases in
\[
\myO\left(\frac{s^{O((1/\theta_1+1/\theta_2)\log(\sqrt{3}/\delta))}}{\delta^4}\right)
\]
time.
\end{proof}

\subsection{Classical estimation of the ground state energy of local Hamiltonians}\label{sub:proofthclassical}
In this subsection we show how the results from Section \ref{sec:singular} can be used to prove Theorem \ref{th:classical}.

First, we introduce a ``gapped decision version'' of the problem $\GLHest$, which we write $\GLH(k,a,b, \delta)$, for an integer $k\ge2$, two real numbers $a,b\in[-1,1]$ such that $a<b$, and $\delta\in(0,1]$. This problem is defined as follows.

\begin{center}
\fbox{
\begin{minipage}{13 cm} \vspace{2mm}

\noindent$\GLH(k,a,b,\delta)$ \hspace{3mm}(guided local Hamiltonian, gapped decision version) \\\vspace{-3mm}

\noindent\hspace{3mm} Input: a $k$-local Hamiltonian $H$ acting on $n$ qubits such that $\norm{H}\le 1$

\noindent\hspace{15mm}
sampling-access to a vector $u\in\Comp^{2^n}$ such that $\norm{u}\le 1$
\vspace{2mm}

\noindent\hspace{3mm}
Promises: (i) $\norm{\Pi_Hu}\ge \delta$

\noindent\hspace{21mm} (ii) either $\lambda_{H}\le a$ or $\lambda_{H}\ge b$ holds
\vspace{2mm}

\noindent\hspace{3mm} Goal: decide which of $\lambda_{H}\le a$ or $\lambda_{H}\ge b$ holds
\vspace{2mm}
\end{minipage}
}
\end{center}

The following result follows easily from Theorem \ref{th:SVT-formal}.
\begin{proposition}\label{prop:classical-decision}
For any constants $a,b\in[-1,1]$ such that $a<b$, any constant $\delta\in(0,1]$ and any $k=O(\log n)$, the problem $\GLH(k,a,b,\delta)$ can be solved classically with probability at least $1-1/\exp(n)$ in $\poly(n)$ time.
\end{proposition}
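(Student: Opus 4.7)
The plan is to reduce $\GLH(k,a,b,\delta)$ to the singular value estimation problem $\SV$ and then invoke \Cref{th:SVT}. The main preparatory step is a shift and rescaling of $H$ so that the matrix fed into $\SV$ is positive definite with smallest eigenvalue bounded away from zero; this lets us pick $t_1>0$, which is necessary to avoid having $\theta_1=0$ in Lemma~\ref{lemma1}.

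Concretely, I would set $H':=(H+2I)/3$. Since $\norm{H}\le 1$, the eigenvalues of $H'$ lie in $[1/3,1]$, so $\norm{H'}\le 1$, and sparse query-access to $H'$ is immediate from that of $H$ (adding a multiple of $I$ only modifies diagonal entries, at most doubling the per-row sparsity). Because $H'$ is Hermitian and positive definite, its singular values coincide with its eigenvalues and left and right singular vectors coincide with eigenvectors; in particular, the ground space of $H$ is exactly the eigenspace of $H'$ for the smallest eigenvalue $\lambda_{H'}=(\lambda_H+2)/3$, so the projector $\Pi_H$ equals the projector onto this eigenspace of $H'$. Sampling-access to $u$ is inherited from the input to $\GLH$ with $\zeta=0$.

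Next I would choose the $\SV$ parameters to be constants: $t_1:=1/3$, $t_2:=(a+2)/3$, $\theta_1:=1/3$, and $\theta_2:=(b-a)/6$. A short check confirms $0\le \theta_1\le t_1<t_2\le 1-\theta_2$. In the YES case, $\lambda_H\le a$ implies $\lambda_{H'}\in[1/3,(a+2)/3]=[t_1,t_2]$, so $H'$ has a singular value in $[t_1,t_2]$; moreover, the ground space of $H$ lies inside the subspace of $H'$ corresponding to singular values in $[t_1,t_2]$, whence $\norm{\Pi_{H'}^{[t_1,t_2]}u}\ge \norm{\Pi_H u}\ge \delta$. In the NO case, $\lambda_H\ge b$ forces every eigenvalue of $H'$ to be at least $(b+2)/3=t_2+(b-a)/3>t_2+\theta_2$, so $H'$ has no singular value in $(t_1-\theta_1,t_2+\theta_2)=(0,t_2+\theta_2)$. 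Thus case (i) of $\SV$ corresponds to the YES case of $\GLH$ and case (ii) to the NO case, and the two cases are distinguished by \Cref{th:SVT}.

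Finally I would bound the runtime and success probability. Since all of $t_i,\theta_i,\delta$ are constants, the exponent $c(1/\theta_1+1/\theta_2)\log(\sqrt{3}/\delta)$ in the bound of \Cref{th:SVT} is a constant. For a $k$-local Hamiltonian on $n$ qubits with $\poly(n)$ terms and $k=O(\log n)$, each local term is $2^k$-sparse as a $2^n\times 2^n$ matrix, so the overall sparsity satisfies $s\le \poly(n)\cdot 2^k=\poly(n)$. Hence the total runtime is $\myO(\poly(n)^{O(1)})=\poly(n)$, and the success probability is $1-1/\poly(2^n)=1-1/\exp(n)$, as claimed. The only real subtlety is the one already flagged in the first paragraph: avoiding $\theta_1=0$, which is why the shift-and-rescale to $H'$ is essential rather than the more naive choice $H''=(H+I)/2$ that allows $\lambda_{H''}=0$.
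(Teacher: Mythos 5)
Your proof is correct and follows essentially the same approach as the paper: both shift and rescale $H$ to a positive-definite Hermitian matrix (you use $(H+2I)/3$ where the paper uses $(H+3I)/4$), then invoke \Cref{th:SVT} on the resulting $\SV$ instance with appropriate constant parameters. The only differences are the specific affine transformation and the resulting values of $t_1,t_2,\theta_1,\theta_2$, which are immaterial.
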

\begin{proof}
Let $(H,u)$ denote the input of the problem $\GLH(k,a,b,\delta)$. Since $H$ is a $k$-local Hamiltonian, i.e., $H$ is a sum of $m=\poly(n)$ terms, where each term is a $k$-local Hermitian matrix, $H$ is $m2^k$-sparse. Since $\norm{H}\le 1$, all the eigenvalues of $H$ are in the interval $[-1,1]$. The Hermitian matrix $\frac{H+3I}{4}$ is thus $(m2^k+1)$-sparse, and has all its eigenvalues in the interval $[1/2,1]$.
In order to solve $\GLH(k,a,b,\delta)$, we can simply use the algorithm of Theorem \ref{th:SVT-formal} for $\SV(m2^k+1,1/2,(3+a)/4,1/2,(b-a)/4,\delta,0)$ on input $(\frac{H+3I}{4},u)$.
\end{proof}

We then show how Theorem \ref{th:classical} follows from Proposition \ref{prop:classical-decision}.
\begin{proof}[Proof of Theorem \ref{th:classical}]
Let $(H,u)$ denote the input of the problem $\GLHest(k,\epsilon,\delta)$.

For some integer $r=O(1)$ that will be fixed  later, decompose the interval $[-1,1]$ into $2r$ intervals, each of length $1/r$. The $i$-th interval is $[\frac{i-r-1}{r},\frac{i-r}{r}]$, for each $i\in\{1,\ldots,2r\}$.

For each $i\in\{1,\ldots,2r\}$, we apply the algorithm of Proposition \ref{prop:classical-decision} (which we denote below by Algorithm $\Cc$) to solve the problem $\GLH(k,a_i,b_i,\delta)$ on input $(H,u)$ with $a_i=\frac{i-r-1}{r}$ and $b_i=\frac{i-r}{r}$. With probability at least $1-1/\exp(n)$, Algorithm~$\Cc$ never errs during the $2r$ iterations. We assume below that this is the case. Under this assumption, for each $i$ we have:
\begin{itemize}
\item
 if $\lambda_H\le a_i$ then $\Cc$ outputs ``$\lambda_H\le a_i$'',
\item
if $\lambda_H\ge b_i$ then $\Cc$ outputs ``$\lambda_H\ge b_i$'',
\item
if $\lambda_H\in(a_i,b_i)$ then the output of $\Cc$ is arbitrary (i.e., the output is either ``$\lambda_H\le a_i$'' or ``$\lambda_H\ge b_i$'').
\end{itemize}

One of the following three cases necessarily occurs:
\begin{itemize}
\item[(a)] Algorithm $\Cc$ outputs ``$\lambda_H\le a_i$'' for all iterations $i\in\{1,\ldots,2r\}$,
\item[(b)] Algorithm $\Cc$ outputs ``$\lambda_H\ge b_i$'' for all iterations $i\in\{1,\ldots,2r\}$,
\item[(c)] there is one value $i_0\in\{1,\ldots,2r\}$ such that Algorithm $\Cc$ outputs ``$\lambda_H\ge b_i$'' for all iterations $i=1,\ldots,i_0$ and outputs ``$\lambda_H\le a_i$'' for all iterations $i=(i_0+1),\ldots, 2r$.
\end{itemize}
In Case (a) we can conclude that $\lambda_H\in [-1,-1+1/r]$. In Case (b) we can conclude that $\lambda_H\in [1-1/r,1]$. In Case (c) we can conclude that $\lambda_H\in [\frac{i_0-r-1}{r},\frac{i_0-r+1}{r}]\cap[-1,1]$. Taking $r=2/\epsilon$ thus guarantees that we get an $\epsilon$-additive-approximation of $\lambda_H$.
\end{proof}

\begin{remark}
While the problem $\GLHest$ stated in the introduction and the problem $\GLH$  stated above assume (for simplicity) perfect sampling-access to the vector $u$, i.e., $0$-sampling-access, Theorem \ref{th:classical} also holds for the version of the problem where $u$ is given via $\zeta$-sampling-access with $\zeta\le \delta^2/56$, since the algorithm of Theorem \ref{th:SVT-formal} works for such sampling-access as well.
\end{remark}
\section{Implications to the Quantum PCP Conjecture}\label{sec:PCP}
In this section we discuss the implications of our dequantization results to the quantum PCP conjecture.

We first define a notion of quantum states that have a short classical description from which efficient sampling-access to the state can be simulated. More precisely, we consider families of complex vectors that can be represented by short binary strings that allow to efficiently generate the triple $(\Qq_u, \Sq_u, m)$ of Definition~\ref{def:sample}.
\begin{definition}\label{def:description}
For a parameter $\zeta\in[0,1]$,
we say that a family of complex vectors $\Ff=\{u^{(\ell)}\}_{\ell\in\Nat}$ has a succinct representation allowing $\zeta$-sampling-access if there exist an injective function $\rep\colon\Ff\to\{0,1\}^\ast$, a polynomial $q$, and a polynomial-time algorithm $\Aa$ satisfying the following conditions.
\begin{itemize}
\item
For all $u\in\Ff$, the length of the binary string $\rep(u)$ is at most $q(\log (\dim u))$, where $\dim u$ denotes the dimension of $u$.
\item
Algorithm $\Aa$ receives as input a binary string. If the string does not correspond to the string $\rep(u)$ for some $u\in\Ff$, then Algorithm $\Aa$ outputs an error message. Otherwise Algorithm $\Aa$ on input $\rep(u)$ outputs the classical description\footnote{By a classical representation of $\Qq_{u}$ and $\Sq_{u}$ we mean the description of $\poly(\log \dim u)$-size circuits that implement these two algorithms.} of Algorithms $\Qq_{u}$ and $\Sq_{u}$ and a real number $m$ satisfying the specifications in Definition~\ref{def:sample}.
\end{itemize}
If $\zeta=0$, we say that the family has a succinct representation allowing perfect-sampling-access.
\end{definition}

Let $\zeta\ge 0$ be a constant. Consider a family of complex vectors $\Ff$ that has a succinct representation allowing $\zeta$-sampling-access. For any $n\ge 1$, we define
\[
\Ff_n=\big\{ u\in\Ff\:|\:\dim u=2^n\big\},
\]
the set of vectors of $\Ff$ that are $2^n$-dimensional. We now define a new version of the $k$-local Hamiltonian problem in which we are guaranteed that there exists a vector in $\Ff$ that has some overlap with the ground state of the Hamiltonian. We denote this problem by $\LH_\Ff(k,a,b,\delta)$, for an integer $k\ge2$, two real numbers $a,b\in[-1,1]$ such that $a<b$, and $\delta\in(0,1]$. Note that while~$\zeta$ does not appear explicitly in the notation, the problem $\LH_\Ff(k,a,b,\delta)$ does depend on this parameter since $\Ff$ depends on $\zeta$.
\begin{center}
\fbox{
\begin{minipage}{15.5 cm} \vspace{2mm}

\noindent$\LH_\Ff(k,a,b,\delta)$ \hspace{3mm}($k$-local Hamiltonian problem with samplable approximate ground state)\\\vspace{-3mm}

\noindent\hspace{3mm} Input: a $k$-local Hamiltonian $H$ acting on $n$ qubits such that $\norm{H}\le 1$

\noindent\hspace{3mm}
Promise: (i) there exists a vector $u\in\Ff_n$ such that $\norm{\Pi_H u}\ge \delta$

\noindent\hspace{19mm}
(ii) $\lambda_{H}\le a$, or $\lambda_{H}\ge b$ holds
\vspace{2mm}

\noindent\hspace{3mm} Goal: decide which of $\lambda_{H}\le a$ or $\lambda_{H} \ge b$ holds
\vspace{2mm}
\end{minipage}
}
\end{center}

A consequence of our dequantization result is the following result.

\begin{theorem}[Formal version of \Cref{th:qPCP}]\label{th:qPCP-formal}
For any $k=O(\log n)$, any $a,b\in[-1,1]$ such that $b-a=\Omega(1)$, and any constant $\delta>0$ such that $\delta\ge \sqrt{56\zeta}$, the problem $\LH_\Ff(k,a,b,\delta)$ is in $\MA$.
\end{theorem}
Note that when $\zeta=0$, i.e., when the vectors in the family $\Ff$ have a succinct representation allowing perfect sampling-access, the only condition on $\delta$ is that $\delta$ should be a positive constant.
\begin{proof}[Proof of Theorem \ref{th:qPCP-formal}]
We show below that the problem can be solved with high probability by a $\poly(n)$-time classical verifier that receives a classical witness of length $\poly(n)$.  Note that since $H$ is a $k$-local Hamiltonian, i.e., $H$ is a sum of $m=\poly(n)$ terms  where each term is a $k$-local Hermitian matrix, $H$ is $m2^k$-sparse.

Consider the case where $\lambda_H\le a$. The witness in this case is simply the classical representation $\rep(u)$ of a vector $u\in\Ff_n$ such that $\norm{\Pi_H u}\ge \delta$. The verifier checks that the representation is valid using Algorithm~$\Aa$ from Definition \ref{def:description}. If the representation is not valid, then it rejects. Otherwise, the verifier applies the polynomial-time algorithm of Theorem \ref{th:SVT-formal} (which we denote below by Algorithm~$\Bb$) to solve the problem $\SV(m2^k+1,1/2,(3+a)/4,1/2,(b-a)/4,\delta,\zeta)$ on input $(\frac{H+3I}{4},u)$ using the triple $(\Qq_u,\Sq_u,m)$ obtained by Algorithm~$\Aa$. The verifier  accepts if Algorithm $\Bb$ outputs that $\frac{H+3I}{4}$ has a singular value in the interval $[1/2,(3+a)/4]$, and rejects otherwise. Theorem~\ref{th:SVT-formal} says the verifier will accept with probability $1-1/\exp(n)$.

Consider the case where $\lambda_H\ge b$. For any binary string received as witness, the verifier will reject unless the witness is of the form $\rep(u)$ for some vector $u\in\Ff_n$. If the witness is of this form, Theorem \ref{th:SVT-formal} then guarantees that Algorithm $\Bb$ outputs with probability at least $1-1/\exp(n)$ that $\frac{H+3I}{4}$ has no singular value in the interval $(0,(3+b)/4)$, in which case the verifier rejects.
\end{proof}

Finally, using Definition \ref{def:description}, we give the formal version of the conjecture introduced in Section \ref{sub:app}.

\begin{conjecture}[NLSS conjecture --- formal version of \Cref{conj:variant}]\label{conj:variant-formal}
There exist a family of $O(1)$-local Hamiltonians $\{H_n\}_{n\in\Nat}$, where each $H_n$ acts on $n$ qubits, and a constant $\epsilon>0$ such that for any family of complex vectors $\Ff$ that has a succinct representation allowing perfect-sampling-access, we have for any sufficiently large~$n$:
\[
u^\dagger H_n u >\lambda_{H_n}+\epsilon
\]
for any unit-norm vector $u\in\Ff_n$.
\end{conjecture}

\section*{Acknowledgments}
The authors are grateful to Ryan Babbush, Dominic Berry and Andr{\'a}s Gily{\'e}n for helpful comments about the manuscript, and to anonymous referees for their feedback. FLG was supported by JSPS KAKENHI grants JP19H04066, JP20H05966, JP20H00579, JP20H04139, JP21H04879 and MEXT Quantum Leap Flagship Program (MEXT Q-LEAP) grants JPMXS0118067394, JPMXS0120319794. SG was supported by DFG grants 450041824 and 432788384.

\bibliographystyle{alpha}
\bibliography{Qchemistry}
\end{document}